\newtheorem{theorem}{Theorem}
\newtheorem{definition}{Definition}
\newtheorem{lemma}{Lemma}
\renewcommand{\ae}[1]{{\color{black}{#1}}}
\begin{document}

\title{Resource Efficient Isolation Mechanisms in Mixed-Criticality Scheduling}
\author{
\IEEEauthorblockN{Xiaozhe Gu, Arvind Easwaran}
\IEEEauthorblockA{Nanyang Technological University, Singapore \\
Email: guxi0002@e.ntu.edu.sg, arvinde@ntu.edu.sg}
\and
\IEEEauthorblockN{Kieu-My Phan, Insik Shin}
\IEEEauthorblockA{KAIST, Korea \\
Email: phankieumy@kaist.ac.kr, insik.shin@cs.kaist.ac.kr}
}

\maketitle

\begin{abstract}
Mixed-criticality real-time scheduling has been developed to improve resource utilization while guaranteeing safe execution of critical applications. These studies use optimistic resource reservation for all the applications to improve utilization, but prioritize critical applications when the reservations become insufficient at runtime. Many of them however share an impractical assumption that all the critical applications will simultaneously demand additional resources. As a consequence, they under-utilize resources by penalizing all the low-criticality applications. In this paper we overcome this shortcoming using a novel mechanism that comprises a parameter to model the expected number of critical applications simultaneously demanding more resources, and an execution strategy based on the parameter to improve resource utilization. Since most mixed-criticality systems in practice are component-based, we design our mechanism such that the component boundaries provide the isolation necessary to support the execution of low-criticality applications, and at the same time protect the critical ones. We also develop schedulability tests for the proposed mechanism under both a flat as well as a hierarchical scheduling framework. Finally, through simulations, we compare the performance of the proposed approach with existing studies in terms of schedulability and the capability to support low-criticality applications.

\end{abstract}

\section{Introduction}
\label{sec:introduction}

An increasing trend in embedded systems is towards open computing environments, where multiple functionalities are developed independently and integrated together on a single computing platform~\cite{prisaznuk1992integrated}. 
An important notion behind this trend is the safe isolation of separate functionalities, primarily to achieve fault containment. 
This raises the challenge of how to balance the conflicting requirements of isolation for safety assurance and efficient resource sharing for economical benefits. The concept of \textit{mixed-criticality} appears to be important in meeting those two goals.

In many safety-critical systems, the correct behavior of some functionality (e.g., flight control) is more important (``critical'') to the overall safety of the system than that of another (e.g., in-flight cooling). In order to certify such systems as being correct, they are conventionally assessed under certain assumptions on the worst-case run-time behavior. For example, the estimation of Worst-Case Execution Times (WCETs) of code for highly critical functionalities involves very conservative assumptions that are unlikely to occur in practice. Such assumptions make sure that the resources reserved for critical functionalities are always sufficient. Thus, the system can be designed to be fully safe from a certification perspective, but the resources are in fact severely under-utilized in practice.

In order to close such a gap in resource utilization, Vestal~\cite{Ves07} proposed the mixed-criticality task model that comprises of different WCET values. These different values are determined at different levels of confidence (``criticality'') based on the following principle. A reasonable low-confidence WCET estimate, even if it is based on measurements, may be sufficient for almost all possible execution scenarios in practice. In the highly unlikely event that this estimate is violated, as long as the scheduling mechanism can ensure deadline satisfaction for highly critical applications, the resulting system design may still be considered as safe.

To ensure deadline satisfaction of critical applications, mixed-criticality studies make pessimistic assumptions when a single high-criticality task executes beyond its expected (low-confidence) WCET. They assume that the system will either immediately ignore all the low-criticality tasks~\cite{BBD11,BaFo11,GES11,BBA12,EkYi12,Eas13} or degrade the service offered to them~\cite{BuBa13,JZP13,Su13,PCH14}. They further assume that all the high-criticality tasks in the system can thereafter request for additional resources, up to their pessimistic (high-confidence) WCET estimates. Although these strategies ensure safe execution of critical applications, they have a serious drawback as pointed out in a recent article~\cite{BuBa13}. When a high-criticality task exceeds its expected WCET, the likelihood that all the other high-criticality tasks in the system will also require more resources is very low in practice. For instance, it is unlikely that adaptive cruise control and anti-lock braking, both of which are critical, would simultaneously require additional resources because their execution time depends on different inputs. Cruise control would most likely require additional resources when the cameras and lidars provide dense data, whereas the execution of anti-lock braking mainly depends on speed of the vehicle and friction on the tyres. Therefore, to penalize all the low-criticality tasks in the event that some high-criticality tasks require additional resources seems unreasonable.

In practice, most mixed-criticality systems are component-based wherein different vendors independently design and develop the various applications. For wide applicability, it is then natural that mixed-criticality scheduling strategies must consider the impact of WCET violations across component boundaries. To the extent possible, these strategies must limit this impact to within components, so that other components in the system can continue their execution uninterrupted. One extreme manifestation of this view is the reservation-based approach that completely isolates components but severly under-utilizes the resources. On the other hand, most of the recent mixed-criticality studies such as those mentioned above, completely ignore these component boundaries but still under-utilize resources due to unrealistic assumptions.

\textbf{Contributions.} Addressing the two central issues described above, in this paper we propose a resource efficient mechanism to support low-criticality tasks while still ensuring isolation of high-criticality tasks. This mechanism comprises the following.
\begin{enumerate}
\item A new parameter to model the expected number of simultaneous violation of low-confidence WCET by high-criticality tasks.
\item A corresponding execution strategy that maximizes low-criticality task executions as long as this number is not exceeded.
\item To efficiently support component-based mixed-criticality systems, we employ our mechanism at the component level. We ensure that as long as the number of low-confidence WCET violations within a component does not exceed the component's expected limit, task executions in other components, including low-criticality ones, remain unaffected.
\end{enumerate}

It is worth noting that this mechanism generalizes both the reservation based approach in which high-criticality tasks are allocated resources based on their high-confidence WCETs~\cite{LAA12}, as well as the classical mixed-criticality studies that penalize all the low-criticality tasks (e.g.,~\cite{Eas13}). Considering a mixed-criticality scheduling strategy based on the Earliest Deadline First (EDF) policy (e.g.,~\cite{BBA12,EkYi12,Eas13}), we also derive schedulability tests for the proposed mechanism. We derive these tests for a flat (non-hierarchical) as well as a hierarchical scheduling framework. While both these frameworks ensure isolation for high-criticality tasks as a result of employing criticality-aware scheduling, the hierarchical framework additionally supports \emph{compositionality}, i.e., the ability of a system to derive properties (e.g., schedulability) for higher level components using derived properties of lower level components. We evaluate the performance of the proposed mechanism in terms of schedulability and the ability to support low-criticality executions through extensive simulations. These results show that our proposed mechanism outperforms all the other existing studies in terms of this dual objective.

\textbf{Related Work.} Since Vestal's seminal work in 2007~\cite{Ves07}, a growing number of studies have been introduced for mixed-criticality real-time scheduling, e.g.,~\cite{BBD11,BaFo11,GES11,BBA12,EkYi12,Eas13}, 
sharing the pessimistic strategy that all the low-criticality tasks will be immediately dropped upon WCET violation of a single high-criticality task. Some recent studies have presented solutions to improve support for low-criticality executions~\cite{BuBa13,JZP13,Su13,PCH13,PCH14,fleming2014incorporating}. The elastic mixed-criticality model allows for a flexible release pattern of low-criticality tasks depending on the runtime resource consumption of high-criticality tasks, essentially treating the low-criticality workload as background~\cite{Su13,BuBa13,JZP13}. This was improved by the service adaptation strategy that decreased the dispatch frequency of low-criticality tasks only when a high-criticality task violated its low-confidence WCET. All the above studies however, share the unrealistic assumption that once a single high-criticality task violates its low-confidence WCET, all the other high-criticality tasks in the system will also exhibit similar behavior. The interference constraint graph strategy partially relaxes this assumption, at least in terms of its online strategy for penalizing low-criticality tasks~\cite{PCH13}. The constraint graph is used to specify execution dependencies between high- and low-criticality tasks, and a response-time based approach was presented to determine graph constraints that improve low-criticality executions at runtime. However, it still uses high-confidence WCET estimates for all the high-criticality tasks when determining schedulability (test based on~\cite{Ves07}), which again leads to the same unrealistic assumption and therefore results in resource under-utilization.  Further, none of the above studies consider the impact of WCET violations in the context of component-based systems. A couple of recent studies proposed techniques to support hierarchical scheduling for component-based mixed-criticality systems~\cite{HKM12,LAA12}. These studies focused on implementation issues however, and therefore did not consider the problems discussed above.

\section{System Model}
\label{sec:model_justification}
\subsection{Task and Component}
\label{sec:model}
In this paper we consider constrained deadline mixed criticality sporadic tasks (or \emph{tasks} for short). Such a task can be specified as $\tau_i = (T_i, L_i, \mathcal{C}_i, D_i)$, where $T_i$ denotes the minimum separation between job releases, $L_i$ denotes the criticality level, $\mathcal{C}_i$ is a list of WCET values, and $D_i$ ($\leq T_i$) denotes the relative deadline. We assume that tasks have only two criticality levels, $LC$ denoting low-criticality and $HC$ denoting high-criticality. Hence $L_i \in \{ LC, HC \}$ and $\mathcal{C}_i = \{ C_i^L, C_i^H \}$, where $C_i^L$ denotes $LC$ WCET and $C_i^H$ denotes $HC$ WCET. If $L_i = HC$, then $\tau_i$ is called a \emph{$HC$ task}, otherwise $\tau_i$ is called a \emph{$LC$ task}. We also assume that $C_i^L < C_i^H$ for all the $HC$ tasks, and $C_i^L = C_i^H$ for  all the $LC$ tasks.  Jobs of $\tau_i$ are released with a minimum separation of $T_i$ time units, and each job can execute for no more than $C_i^H$ time units ($C_i^L$ in the case of $LC$ task) within $D_i$ time units from its release. Let $\mathcal{T} = \{ \tau_1, \ldots , \tau_n \}$ denote a set of such mixed-criticality tasks that are scheduled on a single-core processor.

We assume that the tasks are partitioned into \emph{components}, where each component $\mathbb{C} = (\mathcal{W}, TL)$ comprises the following.
\begin{itemize}
\item A \emph{real time workload} $\mathcal{W}$ denoting a subset of tasks from $\mathcal{T}$, and
\item A \emph{$HC$ Tolerance Limit} $TL \in \mathbb{N}$ denoting the maximum $HC$ workload isolation limit of the component. As long as no more than $TL_i$ tasks in the component simultaneously exhibit $HC$ behavior (execution requirement is more than $LC$ WCET), we must ensure that all the job deadlines in the other components, including those of $LC$ jobs, are met. More details about this parameter are presented later in this section.
\end{itemize}

Partitioning the task set into components is mainly driven by practical considerations as mentioned in the introduction. Since these components are developed independently, it is desirable to limit the impact of WCET violations to within components as much as possible, while still efficiently utilizing the resources. The $HC$ tolerance limit $TL$ precisely does that in our model. It could be set based on component properties if information about the runtime behavior of $HC$ jobs is available, e.g., probability of execution requirement exceeding $LC$ WCET. It can also be determined such that the limit is maximized so as to support more $LC$ job executions, while still maintaining system schedulability. $\mathbb{C} = (\mathcal{W}, TL)$ is called a \emph{$LC$} component if every task in its workload is a $LC$ task, and for such components we assume that $TL = 0$. Otherwise, $\mathbb{C}$ is called a \emph{$HC$} component.

\subsection{Task and Component Execution Model}
\label{sec:execution_model}
The execution semantics of a mixed-criticality task has been presented previously~\cite{BBD11}, and we summarize it as follows. A task $\tau_i$ is said to be in low-criticality mode (or \emph{$LC$ mode} for short) as long as no job of the task has executed beyond its $LC$ WCET $C_i^L$. If $\tau_i$ is a $LC$ task, then this is the only available criticality mode. Whereas if $\tau_i$ is a $HC$ task, then it switches to high-criticality mode (or \emph{$HC$ mode} for short) at the time instant when some job of the task requests to execute for more than its $LC$ WCET. In $HC$ mode, jobs of $\tau_i$ can request to execute for no more than $C_i^H$ time units. 

We now define the execution semantics of a component $\mathbb{C} = (\mathcal{W}, TL)$. $\mathbb{C}$ has two execution modes, an \emph{internal mode} that concerns the behavior of tasks in $\mathbb{C}$, and an \emph{external mode} that concerns the behavior of tasks in the other components. We first describe these two modes, and then discuss their implications.

\textbf{Internal Mode.} Component $\mathbb{C}$ experiences \emph{Internal Mode Switch} (or IMS for short) at the earliest time instant when any $HC$ task in $\mathbb{C}$ switches to $HC$ mode. The component switches its internal mode from $LC$ to $HC$ at this time instant. Prior to this mode switch, all the task deadlines are required to be met. After this switch however, all the $LC$ tasks in $\mathbb{C}$ can be dropped, and only the $HC$ task deadlines are required to be met. There is no impact of this mode switch on the other components in the system. 

\textbf{External Mode.} Component $\mathbb{C}$ experiences \emph{External Mode Switch} (or EMS for short) at the earliest time instant when the $(TL+1)^{st}$ $HC$ task in $\mathbb{C}$ switches to $HC$ mode. The component switches its external mode from $LC$ to $HC$ at this time instant. Prior to this mode switch, at most $TL$ tasks in $\mathbb{C}$ were executing in $HC$ mode. After this switch however, all the $HC$ tasks in $\mathbb{C}$ may execute in $HC$ mode. Further, all the $LC$ tasks in the system, including the $LC$ tasks in the other components, are no longer required to meet deadlines. \ae{Component $\mathbb{C}$'s internal as well as external modes could switch back to $LC$ mode when there are no pending jobs in the  system at some time instant.}

Note that the intra- and inter-component execution requirements based on their internal and external modes respectively, are consistent with the mixed-criticality requirements in the existing literature (e.g,~\cite{BBA12}). If $\mathbb{C}$ is a $LC$ component, then its internal and external modes are identical and equal to $LC$. On the other hand if $\mathbb{C}$ is a $HC$ component, then these modes, together with the $HC$ tolerance limit $TL$, are key mechanisms for supporting $LC$ job executions. If $TL>0$, it is possible for IMS and EMS to occur at different time instants (asynchronously). Then, during the interval when component $\mathbb{C}$'s internal mode is $HC$ while its external mode is $LC$, $LC$ tasks in the other components are isolated from the internal mode switch of $\mathbb{C}$. That is, these $LC$ tasks can continue their execution even though some $HC$ tasks in $\mathbb{C}$ are already executing in $HC$ mode.

The proposed model and execution strategy generalizes both the worst-case reservation based approach in which $HC$ tasks are allocated resources based on their $HC$ WCETs~\cite{LAA12}, as well as the classical mixed-criticality studies that drop all the low-criticality tasks upon WCET violation by a single $HC$ task (e.g.,~\cite{Eas13}). The former can be modeled by setting $TL=|H|$, where $|H|$ denotes the total number of $HC$ tasks in the component, while the latter can be modeled by setting $TL=0$. 

\textbf{Scheduling Strategy.} In this paper we focus on the Earliest Deadline First (EDF) strategy, and assume that $LC$ tasks are dropped (not considered for scheduling) once it becomes known that their deadlines are not required to be met.
We have chosen this scheduling strategy because it has been successfully employed in the past for mixed criticality systems (e.g.,~\cite{BBA12,EkYi12,Eas13}). To accommodate the sudden increase in demand when tasks start executing in $HC$ mode, these existing studies artificially tighten the deadlines of $HC$ tasks when they are executing in $LC$ mode. This ensures that when a task switches to $HC$ mode, it has some amount of time left until its real deadline to execute any additional demand. In this paper we assume that such deadline tightening strategies are employed.

For a task $\tau_i = (T_i, L_i, \mathcal{C}_i, D_i)$, we let $D_i^L$ denote the artificially \emph{tightened deadline} in $LC$ mode of execution. By definition, $D_i^L \leq D_i$ for all tasks, and $D_i^L = D_i$ if $L_i=LC$ because no tightening is required for such $LC$ tasks. While a $HC$ task $\tau_i = (T_i, HC, \mathcal{C}_i, D_i)$ is executing in $LC$ mode, $\tau_i$ must receive at least $C_i^L$ processor units before its tightened deadline $D_i^L$. When the task $\tau_i$ switches to $HC$ mode, it must receive at least $C_i^H$ processor units before the actual deadline $D_i$. Note that a $HC$ task in component $\mathbb{C}$ that executes in $LC$ mode after IMS of $\mathbb{C}$ will continue to be scheduled using its tightened deadline $D_i^L$, unless it switches to $HC$ mode. After EMS of $\mathbb{C}$ however, all the $HC$ tasks are assumed to switch to $HC$ mode and will be scheduled using their actual deadlines.

We consider two different scheduling frameworks in this paper; a flat (non-hierarchical) framework in which all the tasks in all the components are collectively scheduled by a single scheduler, and a hierarchical framework in which the tasks in components are scheduled by intra-component schedulers and the components themselves are scheduled by a inter-component scheduler. The flat framework is relevant in applications that do not use hierarchical scheduling (e.g., Deos Real-Time Operating System for avionics~\cite{DO1781}), whereas the hierarchical framework is relevant in applications that require compositionality  (e.g., ARINC653 in avionics~\cite{ARINC653}). Note that a criticality-aware flat scheduler also ensures isolation for high-criticality tasks, and hence from that perspective provides similar functionality as a hierarchical scheduler. In Section~\ref{sec:dbf_test} we present the schedulability test under a flat scheduling framework, and in Section~\ref{sec:stest} we present the schedulability test under a hierarchical scheduling framework. Finally, the capability of the proposed mechanism and the corresponding schedulability tests to support $LC$ job executions are evaluated through extensive simulations in Section~\ref{sec:evaluation}.

\section{Schedulability Test for Flat Scheduling Framework}
\label{sec:dbf_test}

\emph{Demand bound function} ($\mbox{dbf}$), which gives an upper bound on the maximum possible execution demand of tasks in given time interval length, was first proposed to characterize the maximal demand of workloads comprising  non-mixed-criticality tasks~\cite{BMR90}. Since then $\mbox{dbf}$ has been extended to mixed-criticality tasks as well~\cite{EkYi12,Eas13}.

In this section, for the task and component model presented earlier, we propose a $\mbox{dbf}$-based schedulability test under an EDF-based flat scheduling framework. In Section~\ref{sec:job} we present the functions to calculate the demand of two special jobs of a task, and in Section~\ref{sec:task} we use this to compute the $\mbox{dbf}$ of a task (this $\mbox{dbf}$ has already been developed in~\cite{Eas13}). In Section~\ref{sec:dbfcomponent}, we present the $\mbox{dbf}$ of a component, and finally in Section~\ref{sec:flat} we present the $\mbox{dbf}$-based schedulability test. 

Let $t$ denote the time interval length and without loss of generality we assume the time interval is $[0,t)$. Let $t_E (\leq t)$ denote the time instant for External Mode Switch or EMS of $\mathbb{C}$, and $t_I (\leq t_E)$ denote the time instant for Internal Mode Switch or IMS of $\mathbb{C}$. If $\mathbb{C}$ is a $LC$ component, then it has no IMS or EMS, and tasks within it will be dropped after the earliest EMS of any component in the system. For a $HC$ task $\tau_i$ in the workload of $\mathbb{C}$, let $t_i$ denote the time instant when it switches to $HC$ mode. By definition $t_I \leq t_i \leq t_E$. For a $LC$ task $\tau_i$ in the workload of $\mathbb{C}$, let $t_i$ denote the time instant when it is dropped. Note that $t_i$ in the $LC$ case is either equal to $t_I$ or the earliest EMS of any $HC$ component, whichever is earlier.  We use $J_i$ to denote any job of $\tau_i$, and $r(J_i)$ to denote its release time.

\subsection{Demand of two special jobs}
\label{sec:job}
We now introduce how to compute the demand of the first special job which is the last one released by $HC$  task $\tau_i$ before it switches to $HC$ mode at $t_i$. As shown in  Figure~\ref{fig:carryoverd}, this is a job such that $r(J_i) \leq t_i$ and $r(J_i) + T_i > t_i$, and we denote such a job as $J_i^A$.
\begin{figure}[tbp]
\centering
\includegraphics[scale=0.65]{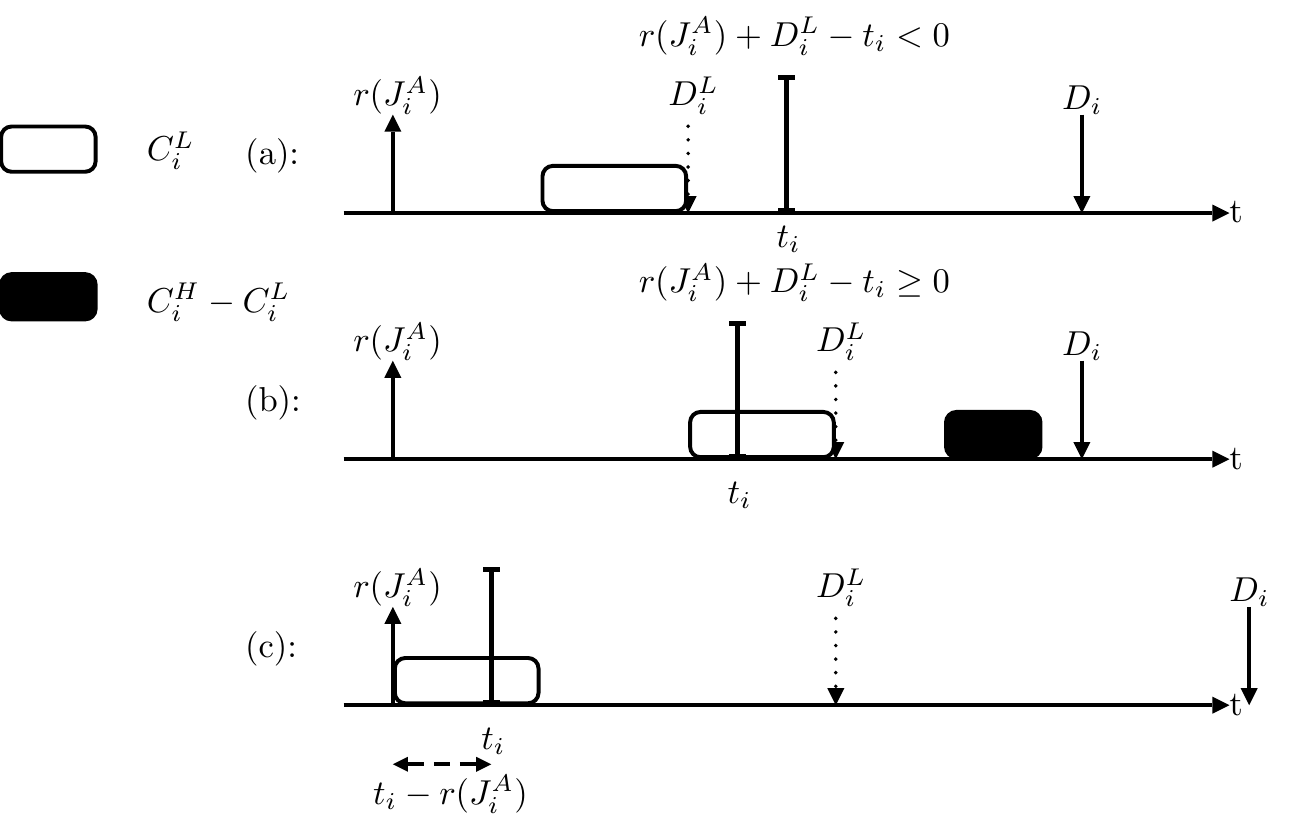}
\caption{Execution pattern for $J_i^A$ that generates maximal demand}
  \label{fig:carryoverd}
\end{figure}
The following lemma bounds the demand of $J_i^A$ when its deadline is greater than $t$. 
\begin{lemma}
\label{lemma:2}
If $r(J_i^A)+D_i^L > t$, then $J_i^A$ will generate zero demand during $[0,t)$. 
Further, if $r(J_i^A)+D_i > t$, then $J_i^A$ will not generate any demand after  $t_i$.
\end{lemma}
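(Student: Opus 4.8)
The plan is to invoke the fundamental property of demand-based EDF analysis: a job mandates execution within $[0,t)$ only if the deadline against which it is scheduled lies at or before $t$; if that deadline exceeds $t$, all of the job's execution may be deferred past $t$ without missing the deadline, so it contributes no demand to $[0,t)$. The one wrinkle is that the carry-over job $J_i^A$ is scheduled against two different deadlines. Throughout the interval $[r(J_i^A),t_i)$ the task $\tau_i$ is still in $LC$ mode, so by the deadline-tightening strategy $J_i^A$ is scheduled with the tightened deadline $r(J_i^A)+D_i^L$; from $t_i$ onward $\tau_i$ is in $HC$ mode, so $J_i^A$ is scheduled with its actual deadline $r(J_i^A)+D_i$ and may demand up to $C_i^H$. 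Since $D_i^L\le D_i$ by definition, we always have $r(J_i^A)+D_i^L\le r(J_i^A)+D_i$, and this single inequality is what drives both parts.

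For the first claim I would assume $r(J_i^A)+D_i^L>t$. The inequality above then gives $r(J_i^A)+D_i\ge r(J_i^A)+D_i^L>t$, so both the tightened $LC$ deadline and the actual $HC$ deadline of $J_i^A$ strictly exceed $t$. Whether $J_i^A$ executes in $LC$ mode (deadline $r(J_i^A)+D_i^L$) or continues in $HC$ mode after the switch (deadline $r(J_i^A)+D_i$), none of its execution is required to complete by $t$; hence, by the demand principle, $J_i^A$ generates zero demand during $[0,t)$.

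For the second claim I would assume only $r(J_i^A)+D_i>t$. After $t_i$ the task is in $HC$ mode, so every unit of $J_i^A$ performed after $t_i$ is governed by the actual deadline $r(J_i^A)+D_i$, which exceeds $t$. Restricting the demand principle to the sub-interval $[t_i,t)$, no execution of $J_i^A$ after $t_i$ is forced within $[0,t)$, i.e.\ $J_i^A$ generates no demand after $t_i$. This is strictly weaker than the first claim: if $r(J_i^A)+D_i^L\le t$ the job may still contribute its $LC$ demand during $[r(J_i^A),t_i)$, which is precisely why the second statement is phrased only about the portion after the switch.

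The step I expect to be the main obstacle is not the arithmetic but the justification that the relevant deadline for demand accounting genuinely switches from $D_i^L$ to $D_i$ at $t_i$, so that the two hypotheses reduce exactly to the two deadline comparisons. This requires pinning down the scheduling semantics of the mode switch for the carry-over job and arguing that merely releasing $J_i^A$ inside $[0,t)$ creates no demand on its own --- only a deadline at or before $t$ does. Making that reduction precise, and observing that $D_i^L\le D_i$ collapses the two deadlines in Part~1, is the real content; everything else is the standard ``deadline beyond the horizon implies no demand'' argument.
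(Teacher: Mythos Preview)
Your proposal is correct and follows essentially the same approach as the paper's proof: both use $D_i^L\le D_i$ to conclude that $r(J_i^A)+D_i^L>t$ forces $r(J_i^A)+D_i>t$, so neither the $LC$ nor the $HC$ deadline of $J_i^A$ lies in the interval and no demand is generated; and for the second claim, both observe that after $t_i$ the governing deadline is $r(J_i^A)+D_i>t$, hence the job contributes nothing in $[t_i,t)$. Your version is simply more explicit about the underlying ``deadline beyond the horizon implies no demand'' principle and the deadline switch at $t_i$, which the paper leaves implicit.
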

\begin{proof}
Since $r(J_i^A)+D_i^L>t\Rightarrow r(J_i^A)+D_i>t~(D_i\geq D_i^L)$, $J_i^A$ does not generate any demand in the interval of interest. On the other hand, if $r(J_i^A)+D_i > t$ and $r(J_i^A)+D_i^L\leq t$, then even if $J_i^A$ does not finish before $t_i$, it does not generate any demand in the interval $[t_i, t)$ because after $t_i$ its deadline is outside the interval of interest. 
\end{proof}

If $J_i^A$ satisfies the condition $r(J_i^A)+D_i^L < t_i$ (Figure~\ref{fig:carryoverd}(a)), then $J_i^A$ must finish by $t_i$, and hence it can generate a demand of up to $C_i^L$ during $[0,t)$. However if $r(J_i^A)+D_i^L\geq t_i$ (Figure~\ref{fig:carryoverd}(b)), then $J_i^A$ will generate maximal demand during $[0,t)$ if it executes as late as possible. In this case it can generate a demand of up to $C_i^H$. One special case is when $t_i\leq r(J_i^A)+D_i^L\leq t$ and $r(J_i^A)+D_i>t$ (Figure~\ref{fig:carryoverd}(c)). In this case, $J_i^A$ will not generate any demand after $t_i$ according to Lemma~\ref{lemma:2}. Thus, the demand of job $J_i^A$ for the interval $[0,t)$ can be bounded as follows.
\begin{equation}
\label{eqn:carryoverjob}
\begin{split}
\mbox{dbf}(J_i^A\!\!,t,t_i)\!=\!\!\!
\begin{cases}
C_i^L,&r(J_i^A)+D_i^L < t_i\\ 
C_i^H,&r(J_i^A)+D_i^L \geq t_i \!\!\!\\&\mbox{and } r(J_i^A)+D_i\leq t  \\
\min\left\{\!t_i\!-\!r(J_i^A),\!C_i^L\! \right\},&t_i\leq r(J_i^A)+D_i^L\leq t\\
                     &\mbox{and } (J_i^A)+D_i>t\\
0,&r(J_i^A)+D_i^L>t
\end{cases}
\end{split}
\end{equation}

Another special job   is the last job released by a $LC$ task $\tau_i$ before it is dropped at $t_i$, and we denote such a job as $J_i^B$. The release time of $J_i^B$ satisfies the conditions $r(J_i^B)\leq t_i$ and $r(J_i^B)+T_i>t_i$.

If $r(J_i^B)+D_i^L>t$ $(D_i^L=D_i)$, $J_i^B$ will generate zero demand during $[0,t)$ because its deadline is outside the interval. Otherwise, it may generate some demand in the interval $[0, t_i)$, because it will be dropped after $t_i$. In order to maximize the demand of $J_i^B$ in this interval, we assume that $J_i^B$ will execute continuously from $r(J_i^B)$. Thus, the demand of job $J_i^B$ for the interval $[0,t)$ can be bounded as follows.
\begin{equation} 
\label{eqn:unnecessary}
\begin{split}
\mbox{dbf}(J_i^B,t,t_i)=
\begin{cases}
\min\left\{t_i-r(J_i^B),C_i^L \right\},& r(J_i^B)+D_i^L \leq t\\
0,~~&\mbox{otherwise}
\end{cases}
\end{split}
\end{equation}

\subsection{$\mbox{Dbf}$ of  task $\tau_i$}
\label{sec:task}
In this section we derive the $\mbox{dbf}$ of a task $\tau_i$ using Equations~\ref{eqn:carryoverjob} and \ref{eqn:unnecessary} presented above. Let $\mbox{dbf}(\tau_i,t,t_i)$ denote the $\mbox{dbf}$ of task $\tau_i$ for a given time interval length $t$ and instant $t_i$.  We present $\mbox{dbf}(\tau_i,t,t_i)$ using four sub-cases $\mbox{dbf}(\tau_i,t,t_i)_{[x]}$, where $x \in\left\{a,b,c,d\right\}$,  defined as follows.
\begin{description}
\item[\textbf{a}: ] $L_i=LC$,
\item[\textbf{b}: ] $L_i=HC$ and $t-t_i< D_i-D_i^L$,
\item[\textbf{c}: ] $L_i=HC$ and  $t-t_i\geq D_i$, and
\item[\textbf{d}: ]$L_i=HC$ and  $D_i-D_i^L\leq t-t_i<D_i$.
\end{description}

If $\tau_i$ satisfies condition a, then it is a $LC$ task. The total demand that $\tau_i$ can generate during $[0,t)$ is then the sum of demand of jobs released before $r(J_i^B)$ and the demand of $J_i^B$ itself. $\tau_i$ generates maximal demand during $[0,t)$ if the release time of the first job is equal to zero, and all successive jobs are released as soon as possible with period $T_i$. Therefore $\mbox{dbf}(\tau_i,t,t_i)_{[a]}$ is given as follows.

\begin{equation}
\mbox{dbf}(\tau_i,t,t_i)_{[a]}=\left \lfloor \frac{t_i}{T_i} \right \rfloor C_i^L+\mbox{dbf}(J_i^B,t,t_i)
\end{equation}
If $\tau_i$ satisfies condition b, c or d,  then $\tau_i$ is a $HC$ task. Therefore, the total demand it generates is the sum of demand of all the jobs released before $t$. Among these jobs, the ones released before $r(J_i^A)$ will generate a demand of $C_i^L$, and the ones released after $r(J_i^A)+T_i$ will generate a demand of $C_i^H$. The demand of job $J_i^A$ itself is given in Equation~\ref{eqn:carryoverjob}. In the following lemmas we derive the $\mbox{dbf}$ of $\tau_i$ for the three conditions.
\begin{figure}
\includegraphics[scale=0.65]{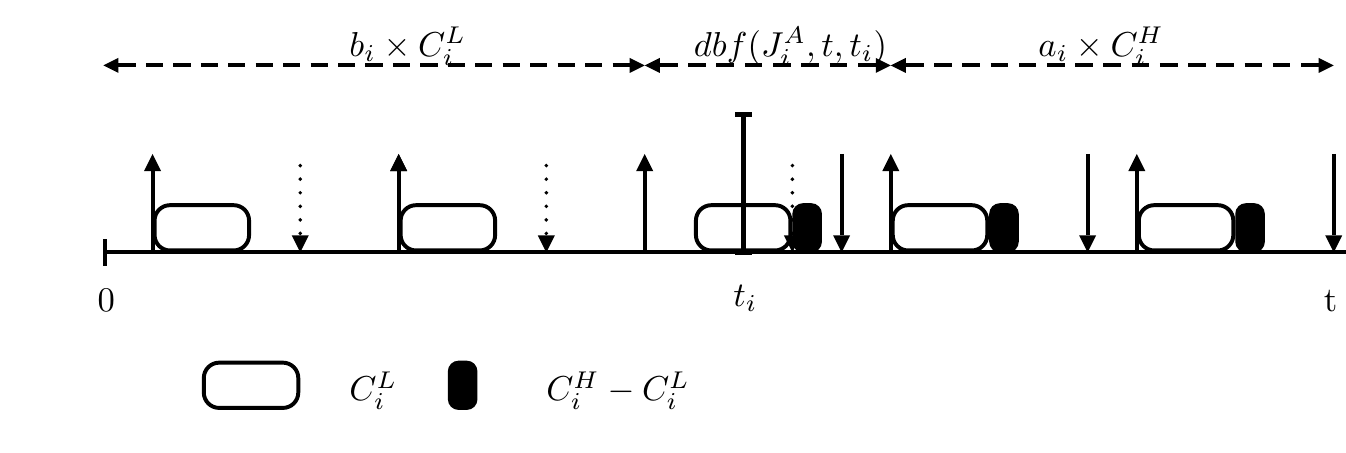} 
\caption{Execution pattern for condition c}
\label{fig:HC}
\end{figure}
\begin{lemma}
\label{lemma:HCcase1}
If $\tau_i$ satisfies condition b ($t-t_i< D_i-D_i^L$), no job of $\tau_i$ can execute for $C_i^H$ time units. Therefore $\tau_i$ can generate maximal demand during $[0,t)$ if the first job of $\tau_i$ is released at time instant $0$ and all the successive jobs are released as soon as possible.
\end{lemma}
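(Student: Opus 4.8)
The plan is to prove the two assertions in turn: first that under condition~b no job of $\tau_i$ contributes its high-criticality demand $C_i^H$ inside the interval of interest, and then that, once every job is effectively capped at $C_i^L$, the worst case reduces to the classical synchronous-arrival argument for an ordinary constrained-deadline sporadic task.

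For the first assertion I would argue by cases on which job could possibly execute in $HC$ mode. A job executes for $C_i^H$ only after $\tau_i$ has switched, i.e.\ only a job that is active at or after $t_i$, and such a job is scheduled against its actual deadline $D_i$ rather than the tightened deadline $D_i^L$. Any job released at $r \ge t_i$ therefore has deadline $r + D_i \ge t_i + D_i > t$, using $D_i \ge D_i^L$ and condition~b ($t < t_i + D_i - D_i^L$); so its $C_i^H$ demand lies entirely outside $[0,t)$. The only remaining candidate is the carryover job $J_i^A$, whose contribution is governed by Equation~\ref{eqn:carryoverjob}: it can reach $C_i^H$ only in the branch $r(J_i^A)+D_i^L \ge t_i$ together with $r(J_i^A)+D_i \le t$. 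But $r(J_i^A)+D_i^L \ge t_i$ forces $r(J_i^A)+D_i \ge t_i + (D_i - D_i^L) > t$ by condition~b, contradicting $r(J_i^A)+D_i \le t$; so this branch is vacuous and, by Lemma~\ref{lemma:2}, $J_i^A$ contributes at most $C_i^L$. Hence no job of $\tau_i$ supplies $C_i^H$ demand in $[0,t)$, and every job is capped at $C_i^L$ with its $LC$-mode deadline $D_i^L$.

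For the second assertion I would then treat $\tau_i$ as an ordinary sporadic task with execution requirement $C_i^L$, relative deadline $D_i^L$, and period $T_i$, whose demand in $[0,t)$ equals $C_i^L$ times the number of jobs with release in $[0,t)$ and deadline at most $t$. By the classical demand-bound argument~\cite{BMR90}, this count is maximized by the synchronous arrival sequence in which the first job is released at $0$ and every successive job is released exactly $T_i$ apart; a standard left-shifting exchange argument shows that any other release pattern admits no more qualifying jobs. I would finally check that this pattern is compatible with the fixed switch instant $t_i$: aligning releases at $0, T_i, 2T_i, \dots$ places the carryover job $J_i^A$ at $\lfloor t_i/T_i\rfloor T_i$, which satisfies $r(J_i^A) \le t_i < r(J_i^A)+T_i$ and respects the minimum separation, so the configuration is feasible.

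I expect the main obstacle to be the bookkeeping around the carryover job $J_i^A$ rather than the synchronous-arrival step, which is classical. Concretely, I must confirm in each branch of Equation~\ref{eqn:carryoverjob} that condition~b rules out the $C_i^H$ value, and that the synchronous alignment simultaneously realizes $J_i^A$'s maximal admissible contribution (its full $C_i^L$) while keeping the switch at the prescribed $t_i$. Reconciling the fixed parameter $t_i$ with the freedom to shift releases is the delicate point, but the inequality $t - t_i < D_i - D_i^L$ is exactly what makes every $HC$-mode deadline fall past $t$ and thereby closes the argument.
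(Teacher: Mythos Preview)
Your proposal is correct and follows essentially the same approach as the paper. The paper's proof condenses your case analysis for the first assertion into a single contradiction: any job contributing $C_i^H$ in $[0,t)$ must satisfy both $r(J_i)+D_i\le t$ and $r(J_i)+D_i^L\ge t_i$, which together force $t-t_i\ge D_i-D_i^L$, contradicting condition~b; the paper then dispatches the second assertion with the one-line observation that $\tau_i$ ``essentially behaves like a $LC$ task,'' where you spell out the classical synchronous-arrival argument.
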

\begin{proof}
We prove this lemma by contradiction. Suppose there exists a job $J_i$ of $\tau_i$ that can generate a demand of $C_i^H$ time units in the interval $[0, t)$. Then it must be true that $r(J_i)+D_i \leq t$ and $r(J_i)+D_i^L \geq t_i\Rightarrow t-t_i\geq D_i-D_i^L$, because $\tau_i$ is a $HC$ task that switched to $HC$ mode at $t_i$. This contradicts our assumption that $t-t_i<D_i-D_i^L$. Thus no job of $\tau_i$ that satisfies condition b can generate a demand of $C_i^H$ time units in the interval $[0,t)$. Therefore $\tau_i$ essentially behaves like a $LC$ task, and this proves the lemma.
\end{proof}
Thus $\mbox{dbf}(\tau_i,t,t_i)_{[b]}$ is given as follows.
\begin{equation}
\setlength{\abovedisplayskip}{3pt}
\setlength{\belowdisplayskip}{3pt}
\mbox{dbf}(\tau_i,t,t_i)_{[b]}=\left \lfloor \frac{t_i}{T_i} \right \rfloor C_i^L+\mbox{dbf}(J_i^A,t,t_i) 
\end{equation}

\begin{lemma}
\label{lemma:HCcase2}
If $\tau_i$ satisfies condition c ($t-t_i\geq D_i$), it generates maximal demand during $[0,t)$ if the first job of $\tau_i$ is released at $t-D_i-\lfloor {(t-D_i)}/{T_i} \rfloor \times T_i$, and all the successive jobs are released as soon as possible (scenario shown in Figure~\ref{fig:HC}).
\end{lemma}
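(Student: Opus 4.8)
The plan is to prove the claim by an exchange argument that maximizes the total demand of $\tau_i$ over all legal release sequences (those respecting the minimum separation $T_i$), for the fixed interval length $t$ and fixed switch instant $t_i$ under condition c. First I would decompose the demand of any such sequence into three contributions, exactly as set up before the lemma: the jobs released strictly before the carry-over job $J_i^A$, each contributing $C_i^L$; the carry-over job $J_i^A$ itself, whose demand is given by Equation~\ref{eqn:carryoverjob}; and the jobs released after $J_i^A$ (hence after $t_i$, in $HC$ mode), each contributing $C_i^H$ provided its deadline lies in $[0,t)$. The key structural fact I would record up front is that condition c, namely $t-t_i\geq D_i$, is equivalent to $t-D_i\geq t_i$, so the post-$t_i$ window is long enough to contain at least one full $HC$ job whose deadline still falls within the interval.

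Next I would reduce to the minimum-separation pattern. By a standard argument, tightening any inter-release gap down to $T_i$ can only pull job deadlines earlier into $[0,t)$ and therefore never decreases the number of in-window jobs; since each such job is worth at least $C_i^L$ and the latest jobs are the high-value $C_i^H$ jobs, the total demand is non-decreasing under this tightening. Hence I may assume without loss of generality that successive jobs are released exactly $T_i$ apart, which fixes the sequence up to a single phase parameter $\phi\in[0,T_i)$ giving the release time of the first job.

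The heart of the proof is then the optimization over $\phi$. I would parametrize the period-$T_i$ sequence by $\phi$ and observe that shifting the whole sequence to the right moves every release later in time, so it can only convert jobs from the $LC$ side of $t_i$ to the $HC$ side (worth $C_i^H>C_i^L$) and never the reverse; thus demand is non-decreasing as $\phi$ increases, so long as no in-window job is lost at the right end. The largest admissible rightward shift is the one that places the deadline of the final job exactly at $t$ (any further shift pushes that deadline past $t$, deleting a valuable $HC$ job, cf.\ Lemma~\ref{lemma:2}); this corresponds precisely to releasing the first job at $\phi=t-D_i-\lfloor (t-D_i)/T_i\rfloor T_i$, with all successive jobs as soon as possible. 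I would then verify that under this alignment the final job is released at $t-D_i\geq t_i$ and is therefore a genuine $HC$ job contributing $C_i^H$, and that the carry-over job $J_i^A$ falls into the second ($C_i^H$) case of Equation~\ref{eqn:carryoverjob}, so that the claimed pattern indeed realizes the maximal demand.

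The step I expect to be the main obstacle is making the phase-shift monotonicity fully rigorous across the piecewise definition of the carry-over demand in Equation~\ref{eqn:carryoverjob}: as $\phi$ increases, the position of $J_i^A$ relative to $t_i$, $D_i^L$ and $t$ changes, so its demand switches between the $C_i^L$, $C_i^H$, and $\min\{\,\cdot\,\}$ branches, and I must check that each incremental rightward step keeps the net change in total demand non-negative rather than introducing a hidden loss at the boundary. The secondary delicate point is justifying the minimum-separation reduction in the mixed-criticality setting, where one might worry that spreading jobs out could rebalance the $LC$/$HC$ split favourably; I would rule this out by noting that the most valuable jobs are always the latest ones, so maximizing their count by tight packing dominates any such rebalancing.
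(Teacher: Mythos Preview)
Your proposal is correct and follows essentially the same logic as the paper's proof: align the last job's deadline exactly at $t$ so that (i) the total number of in-window jobs is maximized and (ii) since $t-t_i\geq D_i$ forces the last release to be at $t-D_i\geq t_i$, the number of jobs contributing $C_i^H$ is simultaneously maximized. The paper states these two maximality facts in a single sentence without further justification, whereas you supply the underlying mechanism (tighten to period-$T_i$ spacing, then shift the phase rightward monotonically until the last deadline hits $t$); your worry about the piecewise behaviour of $\mbox{dbf}(J_i^A,t,t_i)$ across the branches of Equation~\eqref{eqn:carryoverjob} is a level of care the paper simply omits, but it does not change the argument's structure.
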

\begin{proof}
If $t-t_i \geq D_i$ and the first job of $\tau_i$ is released at $t-D_i-\lfloor {(t-D_i)}/{T_i} \rfloor \times T_i$, then the last job released before $t$ will have its deadline at $t$. In this case, $t_i$ happens before the release time of this last job. Therefore the last job can generate a demand of $C_i^H$ in the interval. Additionally, the number of jobs with deadline before $t$ as well as the number of jobs that can generate $C_i^H$ demand during $[0,t)$ are maximized with this pattern. This proves the lemma.
\end{proof}
Intuitively speaking, the demand is maximized when the deadline of a job of $\tau_i$ coincides with $t$, because it maximizes the possible executions for $\tau_i$ in $HC$ mode. Thus, $\mbox{dbf}(\tau_i,t,t_i)_{[c]}$ is given as follows.
\begin{equation}
\begin{split}
\mbox{dbf}(\tau_i,t,t_i)_{[c]}&=b_i C_i^L+\mbox{dbf}(J_i^A,t,t_i)+a_i  C_i^H~\mbox{, where}\\
b_i &= \left \lfloor \frac{t_i-(t-D_i-\lfloor {(t-D_i)}/{T_i}\rfloor\times T_i)}{T_i} \right\rfloor, \mbox{and} \\
a_i  &= \left\lfloor \frac{t-D_i}{T_i}\right \rfloor-b_i.
\label{eqn:condc}
\end{split}
\end{equation}

If $\tau_i$ satisfies condition d, it does not have a single execution pattern that maximizes its demand as stated in the following lemma.   
\begin{lemma}
\label{lemma:HCcase3}
If $\tau_i$ satisfies condition d ($D_i-D_i^L\leq t-t_i<D_i$), it generates maximal demand if its first job is  either released at $0$ (condition b) or at $t-D_i-\lfloor {(t-D_i)}/{T_i}\rfloor\times T_i$ (condition c).
\end{lemma}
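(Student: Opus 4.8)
The plan is to regard the release offset of $\tau_i$'s jobs as a free parameter and to show that the induced demand is always dominated by one of the two extreme patterns named in the statement; since both patterns are realizable, the maximum must be attained by one of them. I would index every candidate pattern by the release time $r(J_i^A)$ of the carryover job, which by its defining inequalities $r(J_i^A)\le t_i<r(J_i^A)+T_i$ ranges over $(t_i-T_i,t_i]$ and, once we place the earliest job at phase $r(J_i^A)\bmod T_i\ge 0$ (which is optimal for the low-criticality part), fixes the entire release sequence.

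The first step is the structural observation that isolates condition d: no job released after $t_i$ can contribute $C_i^H$ to $[0,t)$. Such a job is released at $r(J_i^A)+jT_i$ with $j\ge 1$ and would need deadline $\le t$, i.e.\ $r(J_i^A)\le t-D_i-T_i$; combined with $r(J_i^A)>t_i-T_i$ this forces $t-t_i>D_i$, contradicting condition d. Consequently the demand decomposes as
\[
\mbox{dbf}(\tau_i,t,t_i)=\left\lfloor\frac{r(J_i^A)}{T_i}\right\rfloor C_i^L+\mbox{dbf}(J_i^A,t,t_i),
\]
with the low-criticality count $\lfloor r(J_i^A)/T_i\rfloor$ non-decreasing in $r(J_i^A)$ and the carryover term given by Equation~\ref{eqn:carryoverjob}.

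I would then split on the position of $r(J_i^A)$ relative to $t-D_i$. If $r(J_i^A)\le t-D_i$, then $\lfloor r(J_i^A)/T_i\rfloor\le\lfloor(t-D_i)/T_i\rfloor=b_i$ and the carryover term is at most $C_i^H$, so the demand is bounded by $b_iC_i^L+C_i^H$; condition d guarantees that the choice $r(J_i^A)=t-D_i$ simultaneously realizes $b_i$ low-criticality jobs and a carryover value of exactly $C_i^H$, which is precisely the condition-c pattern (recall $a_i=0$ here). If instead $r(J_i^A)>t-D_i$, then Equation~\ref{eqn:carryoverjob} places the carryover term in its $\min\{t_i-r(J_i^A),C_i^L\}$ or $0$ branch, so it is at most $C_i^L$ and non-increasing in $r(J_i^A)$, while $\lfloor r(J_i^A)/T_i\rfloor\le\lfloor t_i/T_i\rfloor$; these two monotonicities let me charge the demand against the condition-b pattern (first job at $0$, carryover release $\lfloor t_i/T_i\rfloor T_i$).

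The main obstacle is this second case, because here I am comparing not against a fixed number but against the condition-b demand whose own carryover term may itself be $C_i^H$, $C_i^L$, a truncated value, or $0$. I would dispose of it by a short sub-case analysis on the count $\lfloor r(J_i^A)/T_i\rfloor$: if it is strictly smaller than $\lfloor t_i/T_i\rfloor$, the single saved full period already outweighs the at-most-$C_i^L$ carryover; and if it equals $\lfloor t_i/T_i\rfloor$, then $r(J_i^A)$ is at least the condition-b carryover release, so either monotonicity of the carryover term (when condition b also lies in the $>t-D_i$ branch) or the inequality $C_i^L\le C_i^H$ (when condition b's carryover is the full $C_i^H$) closes the bound. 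Combining the two cases shows that every release pattern is dominated by the condition-b or the condition-c configuration, which proves the lemma.
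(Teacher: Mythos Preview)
Your proof is correct and rests on the same structural observation as the paper's: under condition~d, no job released after $t_i$ can have its deadline inside $[0,t)$, so the demand collapses to $\lfloor r(J_i^A)/T_i\rfloor\,C_i^L+\mbox{dbf}(J_i^A,t,t_i)$ and at most one job can contribute $C_i^H$. Where you differ is in how maximality is argued. The paper simply says that the condition-c pattern already achieves the single $C_i^H$ job and that ``the only way to further increase the demand is to add a new job by shifting the pattern left to the point when the first job is released at $0$''; this is an intuitive shifting argument and leaves implicit why no intermediate offset can beat both endpoints. You instead parameterize by $r(J_i^A)$, split at the threshold $t-D_i$, and in each region bound the two summands separately (count and carryover), with a further sub-split on $\lfloor r(J_i^A)/T_i\rfloor$ to handle the comparison against the condition-b carryover. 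This buys you a fully rigorous domination argument that covers the boundary cases (e.g., when the condition-b carryover itself lands in the $C_i^H$ branch) that the paper's sketch does not explicitly address; the paper's version is shorter but relies on the reader accepting the shifting intuition.
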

\begin{proof}
Since $D_i-D_i^L\leq t-t_i< D_i$, $\tau_i$ can have at most one job that can generate a demand of $C_i^H$ in the interval $[0,t)$. If the first job of $\tau_i$ is released at $t-D_i-\lfloor (t-D_i)/T_i \rfloor T_i$ and all the successive jobs are released as soon as possible (release pattern of condition c), then the last job is a special job $J_i^A$ and is the only job generating $C_i^H$ demand. The only way to further increase the demand of $\tau_i$ is to add a new job in the interval by shifting the pattern left to the point when  the first job is released at time instant $0$. 
\end{proof}
Thus, $\mbox{dbf}(\tau_i,t,t_i)_{[d]}$ is given as follows.
\begin{equation}
\mbox{dbf}(\tau_i,t,t_i)_{[d]}=\max\left\{\mbox{dbf}(\tau_i,t,t_i)_{[b]},\mbox{dbf}(\tau_i,t,t_i)_{[c]} \right\}
\end{equation}

\subsection{$\mbox{Dbf}$ of component $\mathbb{C}$}
\label{sec:dbfcomponent}
In this section we  present the $\mbox{dbf}$  of a component $\mathbb{C} = \{\mathcal{W}, TL \}$. Let $\mbox{dbf}(\mathbb{C},t,t_E,t_I)$  denote the $\mbox{dbf}$ of component $\mathbb{C}$ for a given time interval length $t$, with mode-switch instants $t_I$ (IMS) and  $t_E$ (EMS).

We first present $\mbox{dbf}$ for the case when $TL=0$ and then for  the case when $TL > 0$. Note that among all the $HC$ tasks in $\mathbb{C}$, at most $TL$ of them can switch to $HC$ mode in the interval $[t_I, t_E)$, while all the remaining $HC$ tasks are assumed to switch to $HC$ mode at $t_E$.

If $TL=0$, then this means $t_E$ (EMS) is equal to $t_I$ (IMS), because $\mathbb{C}$'s internal and external modes will switch at the same time. Thus, each $HC$ task $\tau_i$ in $\mathbb{C}$ will switch to $HC$ mode at $t_i = t_I =t_E$, and hence $\mbox{dbf}(\mathbb{C},t,t_E=t_I,t_I)$ is given as follows.
\begin{equation}
\setlength{\abovecaptionskip}{-0.35cm} 
\setlength{\belowcaptionskip}{-0.35cm} 
\mbox{dbf}(\mathbb{C},t,t_E=t_I,t_I)=\sum_{\tau_i\in \mathbb{C}}\mbox{dbf}(\tau_i,t,t_I) ~~~(TL=0)
\end{equation}
If $TL > 0$, then at most $TL$ $HC$ tasks can switch to $HC$ mode before $t_E$. To compute the $\mbox{dbf}$ of $\mathbb{C}$, we then need to determine which $HC$ tasks should switch to $HC$ mode before $t_E$ so as to maximize the total demand. The following lemma asserts that for any $HC$ task, its demand is maximized when it switches to $HC$ mode either at $t_I$ or $t_E$.

\begin{lemma}
\label{lemma:max}
If  a $HC$ task $\tau_i$ switches to $HC$ mode at some time $t_i \in [t_I,t_E]$, then $\mbox{dbf}(\tau_i,t,t_i)$ is maximized when $t_i$ is either equal to $t_E$ or $t_I$.
\end{lemma}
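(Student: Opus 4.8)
The plan is to regard $\mbox{dbf}(\tau_i,t,t_i)$ as a function of the single variable $t_i$ on $[t_I,t_E]$ and show that this function attains its maximum at an endpoint. Since $L_i=HC$, the first step is to split $[t_I,t_E]$ according to which of the conditions b, c, d of Section~\ref{sec:task} applies. The relevant thresholds are $t-D_i$ and $t-D_i+D_i^L$: condition c holds when $t_i\le t-D_i$, condition d when $t-D_i<t_i\le t-D_i+D_i^L$, and condition b when $t_i>t-D_i+D_i^L$. Thus condition c governs the small values of $t_i$ (near $t_I$) and condition b the large ones (near $t_E$), with condition d sandwiched in between.

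For the analysis I would introduce two auxiliary functions, regarded as defined for every $t_i$: $B(t_i):=\lfloor t_i/T_i\rfloor C_i^L+\mbox{dbf}(J_i^A,t,t_i)$, the condition-b formula (release pattern anchored at $0$), and $C(t_i):=b_iC_i^L+\mbox{dbf}(J_i^A,t,t_i)+a_iC_i^H$, the condition-c formula (pattern whose last job has its deadline at $t$). By Lemmas~\ref{lemma:HCcase1}, \ref{lemma:HCcase2} and~\ref{lemma:HCcase3}, the true demand equals $C(t_i)$ on region c, $B(t_i)$ on region b, and $\max\{B(t_i),C(t_i)\}$ on region d; hence in all three regions $\mbox{dbf}(\tau_i,t,t_i)\le\max\{B(t_i),C(t_i)\}$.

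The two facts I would then establish are that $B$ is non-decreasing and $C$ is non-increasing in $t_i$. Intuitively, a later switch in the LC-like regime lets more jobs, and more of the carry-over job $J_i^A$, accumulate demand before $t_i$, so $B$ grows; whereas under the anchored-at-$t$ pattern an earlier switch converts jobs from the $C_i^L$ count ($b_i$) into the $C_i^H$ count ($a_i$), and since $C_i^H>C_i^L$ this raises the demand, so $C$ shrinks as $t_i$ grows. Granting these, $\max\{B(t_i),C(t_i)\}\le\max\{B(t_E),C(t_I)\}$ for every $t_i\in[t_I,t_E]$. It then remains to verify the two endpoint inequalities $B(t_E)\le\mbox{dbf}(\tau_i,t,t_E)$ and $C(t_I)\le\mbox{dbf}(\tau_i,t,t_I)$: these are immediate when $t_E$ lies in region b or d and when $t_I$ lies in region c or d, and in the remaining cases one argues directly. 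For instance, if $t_E$ lies in region c then $B$ underestimates the true demand there because it ignores the post-switch $C_i^H$ jobs, so $B(t_E)\le C(t_E)=\mbox{dbf}(\tau_i,t,t_E)$; symmetrically, if $t_I$ lies in region b then by Lemma~\ref{lemma:HCcase1} no job runs at $C_i^H$, the whole interval is in region b, and non-decreasing $B$ already puts the maximum at $t_E$. Chaining the three inequalities yields $\mbox{dbf}(\tau_i,t,t_i)\le\max\{\mbox{dbf}(\tau_i,t,t_E),\mbox{dbf}(\tau_i,t,t_I)\}$, which is the claim.

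I expect the monotonicity of $B$ and $C$ to be the main obstacle. Each involves a floor function that jumps as $t_i$ crosses a job release, combined with the four-way piecewise definition of $\mbox{dbf}(J_i^A,t,t_i)$ in Equation~\ref{eqn:carryoverjob}; the work is to check that at every such jump the change in the floor term and the reset of the carry-over contribution combine in the claimed direction. I would handle this with a within-period argument (for fixed $J_i^A$ the carry-over term is manifestly monotone in $t_i$) followed by a boundary check at each release instant confirming the total does not move the wrong way.
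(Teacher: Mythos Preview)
Your proposal is correct and follows essentially the same approach as the paper: partition $[t_I,t_E]$ into the regions where conditions b, d, c apply, show that the condition-b formula is non-decreasing in $t_i$ and the condition-c formula is non-increasing, and use $\mbox{dbf}(\tau_i,t,t_i)_{[d]}=\max\{\mbox{dbf}_{[b]},\mbox{dbf}_{[c]}\}$ to glue the regions together. The paper's proof is organized region by region rather than via global auxiliary functions $B,C$, but the monotonicity claims and the resulting endpoint comparison are the same.
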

\begin{proof}
\ae{
Suppose $\tau_i$ satisfies condition b when $t_i = t_E$, i.e., $t-t_E <D_i - D_i^L$. Then as $t_i$ decreases, $\tau_i$ could eventually satisfy condition d, i.e., $D_i - D_i^L \leq t-t_i < D_i$, and finally condition c, i.e., $t-t_i\geq D_i$. Without loss of generality, assume that $\tau_i$ satisfies condition b for $t_i \in (t_b, t_E]$, condition d for $t_i \in (t_d, t_b]$, and condition c for $t_i \in [t_I, t_d]$, where $t_I\leq t_d\leq t_b \leq t_E$. 

\textbf{Case 1}($t_i\in [t_I, t_d]$): In this case, $\mbox{dbf}(\tau_i,t,t_i)_{[c]}$ (see Equation~\ref{eqn:condc}) is maximized if $t_i=t_I$. This is because as $t_i$ decreases from $t_d$ to $t_I$, the number of jobs generating $C_i^H$ demand will remain the same or increase, while the total number of jobs that generate demand for this time interval remains unchanged.  \textbf{Case 2} ($t_i\in (t_b, t_E]$): In this case, $\mbox{dbf}(\tau_i,t,t_i)_{[b]}=\left \lfloor \frac{t_i}{T_i} \right \rfloor C_i^L+\mbox{dbf}(J_i^A,t,t_i)$. Then as $t_i$ increases from $t_b$ to $t_E$, $\mbox{dbf}(J_i^A,t,t_i)$ and $\left \lfloor \frac{t_i}{T_i} \right \rfloor\times C_i^L$ will stay the same or increase. Thus $\mbox{dbf}(\tau_i,t,t_i)_{[b]}$ is maximized when $t_i=t_E$.  \textbf{Case 3} ($t_i\in(t_d, t_b]$): From Lemma~\ref{lemma:HCcase3} we know that $\mbox{dbf}(\tau_i,t,t_i)_{d}=\max\left\{\mbox{dbf}(\tau_i,t,t_i)_{[b]},\mbox{dbf}(\tau_i,t,t_i)_{[c]}\right\}$. While  $\mbox{dbf}(\tau_i,t,t_i)_{[b]}|t_i\in(t_d, t_b]$ is maximized if $t_i=t_b$, $\mbox{dbf}(\tau_i,t,t_i)_{[c]}|t_i\in(t_d, t_b]$ stays the same.  Since $\mbox{dbf}(\tau_i,t,t_b)_{[b]}\leq \mbox{dbf}(\tau_i,t,t_E)_{[b]}$ and $\mbox{dbf}(\tau_i,t,t_b)_{[c]}\leq \mbox{dbf}(\tau_i,t,t_I)_{[c]}$, combining the above three cases, we conclude that  $\mbox{dbf}(\tau_i,t,t_i)$ is maximized when $t_i=t_I$ or $t_i=t_E$. 
}
\end{proof}
\ae{
Let $\Delta_i=\max\{0,\mbox{dbf}(\tau_i,t,t_I)-\mbox{dbf}(\tau_i,t,t_E)\}$. From Lemma~\ref{lemma:max} we know that task $\tau_i$ generates maximum demand when $t_i=t_E$ or $t_i=t_I$.  Therefore $\Delta_i$ denotes the maximum possible increase in the demand of $\tau_i$ (if it increases) for a time interval length $t$ when $\tau_i$ is chosen as one of the $TL$ tasks to switch to $HC$ mode before $t_E$. Once we compute $\Delta_i$ for all the $HC$ tasks in component $\mathbb{C}$, we sort these values in descending order and select the first $TL$ elements. Let the corresponding set of $TL$ $HC$ tasks be denoted by $\mathcal{G}$. The total maximum demand of all the tasks in $\mathbb{C}$ is then given by the following equation.
}
\begin{equation}
\setlength\abovedisplayskip{1pt} 
\setlength\belowdisplayskip{1pt} 
\begin{split}
\mbox{dbf}(\mathbb{C},t,t_E,t_I)=&\sum\limits_{L_i=HC}\mbox{dbf}(\tau_i,t,t_E)+ \sum\limits_{\tau_i\in \mathcal{G}} \Delta_i\\&+\sum\limits_{L_i=LC}\mbox{dbf}(\tau_i,t,t_I)
\end{split}
\end{equation}
A tighter bound for the $\mbox{dbf}$ of component $\mathbb{C}$ can be obtained using an optimization presented in Section~\ref{sec:opt} of the Appendix.
\subsection{Schedulability Test and Tolerance Limit}
\label{sec:flat}
In this section we derive the schedulability test for a mixed-criticality system comprising  multiple components and scheduled under a flat scheduling framework. Consider a system  with $p$ $HC$ components $\mathbb{C}_{1},\mathbb{C}_{2},\ldots,\mathbb{C}_{p}$ and $q$ $LC$ components $\mathbb{C}_{p+1},\mathbb{C}_{p+2},\ldots,\mathbb{C}_{p+q}$. Each $HC$ component $\mathbb{C}_{i}$ can independently switch its internal mode to $HC$ at $t_{Ii}$. Once the first $HC$ component switches its external mode to $HC$ at $t_E$, all the $LC$ tasks in the system are immediately dropped. We assume that all the $HC$ tasks in the system can thereafter execute in $HC$ mode.

Suppose there is a first deadline miss in the system at some time instant $t$. Then, the total maximum demand generated by the system in $[0,t)$ must be greater than $t$. This assertion immediately leads to the following theorem that presents the schedulability test.

\begin{theorem}
\label{theorem:test1}
A mixed-criticality system comprising  $p$ $HC$ components and $q$ $LC$ components is schedulable under a flat scheduling framework if,
$\forall t: 0 \leq t \leq t_{MAX}, \forall t_E: 0 \leq t_E \leq t, \forall t_{Ii}: 0 \leq t_{Ii}\leq t_E$, 
\\
\begin{equation}
\sum\limits_{i=1}^{i\leq p+q} \mbox{dbf}(\mathbb{C}_{i},t,t_E,t_{Ii})\leq t,
\end{equation}
where $t_{MAX}$ is a pseudo-polynomial in the size of the input, and is defined in Section~\ref{sec:tmax} of the Appendix.
\label{thm:thm_hc1}
\end{theorem}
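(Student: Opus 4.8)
The plan is to argue by contradiction using the standard processor-demand (busy-period) technique for EDF, adapted to the mixed-criticality mode-switch semantics. Suppose the stated condition holds but the system is nonetheless unschedulable under EDF. Then some job misses its deadline, and I would let $t$ be the instant of the \emph{first} such miss. Following the classical EDF argument, I would locate the latest instant before $t$ at which the processor either idles or executes a job whose (effective, possibly tightened) deadline exceeds $t$, and reset the time origin to that instant so that, without loss of generality, the interval of interest is $[0,t)$. By the work-conserving property of EDF and the minimality of the first miss, throughout $[0,t)$ the processor executes only jobs that are released in $[0,t)$ and are required to meet a deadline at or before $t$; since a deadline is nevertheless missed at $t$, the total execution requirement of exactly these jobs must strictly exceed $t$.

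The next step is to identify the mode-switch instants actually realized in this execution. Let $t_E^\ast \le t$ be the global EMS instant (the earliest time an $HC$ component reaches its external mode switch, after which all $LC$ tasks system-wide are dropped and every remaining $HC$ task is assumed to run in $HC$ mode), and for each $HC$ component $\mathbb{C}_i$ let $t_{Ii}^\ast \le t_E^\ast$ be its realized IMS instant; for $LC$ components the relevant drop instant is $t_E^\ast$, consistent with the execution model of Section~\ref{sec:execution_model}. These realized values satisfy precisely the range constraints quantified in the theorem, namely $0\le t_E^\ast \le t$ and $0\le t_{Ii}^\ast \le t_E^\ast$.

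I would then bound the demand established above component by component. For each component $\mathbb{C}_i$, the jobs contributing to the busy interval all belong to tasks in $\mathbb{C}_i$ whose deadlines fall within $[0,t)$ under the realized mode switches, and by construction of the task- and component-level demand bound functions in Sections~\ref{sec:task} and~\ref{sec:dbfcomponent} (in particular the worst-case release patterns of Lemmas~\ref{lemma:HCcase1}--\ref{lemma:HCcase3} and the optimal selection of the $TL$ early-switching tasks justified by Lemma~\ref{lemma:max}), this contribution is at most $\mbox{dbf}(\mathbb{C}_i,t,t_E^\ast,t_{Ii}^\ast)$. Summing over all $p+q$ components shows that the total realized demand is at most $\sum_{i=1}^{p+q}\mbox{dbf}(\mathbb{C}_i,t,t_E^\ast,t_{Ii}^\ast)$, which therefore strictly exceeds $t$. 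This directly contradicts the hypothesis instantiated at the triple $(t,t_E^\ast,\{t_{Ii}^\ast\})$, so no deadline miss can occur.

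Finally, I would reduce the infinite family of test points to a finite, pseudo-polynomial one: it suffices to verify the inequality for $t\le t_{MAX}$, since beyond this horizon the utilization-based argument of Section~\ref{sec:tmax} of the Appendix guarantees the condition cannot first fail. The only genuinely delicate step is the per-component demand bound, i.e., confirming that $\mbox{dbf}(\mathbb{C}_i,t,t_E^\ast,t_{Ii}^\ast)$ dominates the realized demand for \emph{every} admissible task-level behavior and not merely for a single synchronous release pattern; this is exactly what the maximality arguments of the preceding lemmas secure, so the theorem follows by assembling them, with everything else being routine EDF busy-period accounting.
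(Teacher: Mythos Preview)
Your argument is correct and follows the same processor-demand/busy-period approach the paper uses: the paper's own justification is essentially the single sentence preceding the theorem (first deadline miss at $t$ implies total demand in $[0,t)$ exceeds $t$), and you have simply filled in the standard details---locating the busy interval, fixing the realized switch instants $(t_E^\ast,\{t_{Ii}^\ast\})$, and invoking the per-component $\mbox{dbf}$ maximality established in Lemmas~\ref{lemma:HCcase1}--\ref{lemma:max}. Your handling of $t_{MAX}$ also matches Section~\ref{sec:tmax}.
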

The complexity of the schedulability test in Theorem~\ref{thm:thm_hc1} is exponential in the number of $HC$ components, because we need to consider a separate internal mode switch instant for each component. In practice however, we expect the number of $HC$ components scheduled on a single processor to be relatively small, and then the complexity of the proposed test is pseudo-polynomial in the size of the input. Besides, if there is freedom to select the allocation of system tasks to components, then it is feasible to create a component structure comprising only two components, while still fully supporting $LC$ task executions. All the $HC$ tasks in the system are allocated to a single $HC$ component $\mathbb{C}_H=\{\mathcal{W}_H,TL_H\}$, and each $LC$ task can be either allocated to $\mathbb{C}_H$ or to a $LC$ component $\mathbb{C}_L=\{\mathcal{W}_L,TL_L=0\}$. This two-component system is sufficient to consider all the possible design choices for isolating $HC$ and $LC$ task executions. This can be done by considering different values for the tolerance limit $TL_H$, and by considering different allocations of $LC$ tasks to the two components. We can choose the maximum possible value for these tolerance limit as long as the resulting system is still schedulable. Higher tolerance limit indicates support for more $LC$ task executions, and thus better resource utilization. In Section~\ref{sec:evaluation}, we show through simulations that our mechanism outperforms existing studies even with this two-component structure. However, if the allocation of tasks to components is fixed and the number of $HC$ components is not small, then the hierarchical scheduling framework presented in the following section can be used to reduce the complexity of the test.  


\section{Schedulability Test for Hierarchical Scheduling Framework}
\label{sec:stest}
Hierarchical scheduling has emerged as an effective mechanism to support temporal partitioning between applications, serving as a common scheduling paradigm in many mixed-criticality systems in practice~\cite{ARINC653}. It is preferred in practice because it supports \emph{compositionality} so that higher-level properties can be derived from verified component-level properties. Therefore, to increase the practical relevance of the proposed mechanism, we develop a schedulability test under a hierarchical scheduling framework in this section.

\subsection{Execution Strategy under Hierarchical Scheduling}
\label{sec:component_interface}
For hierarchical systems, each component will have an additional parameter $S$ denoting its local scheduler. We specify such a component as $\mathbb{C} = (\mathcal{W}, TL, S)$. The component workload $\mathcal{W}$ is comprised of regular mixed-criticality tasks as well as interface tasks representing the child components. The tasks in the workload $\mathcal{W}$ are scheduled by the local scheduler $S$, independently of all the other components in the system.

\emph{Component interfaces} have been widely used in traditional hierarchical systems to abstractly represent the resource demand and supply of components (see for example~\cite{ShLe03}). From the component's perspective, its interface represents the resource demand of its workload. While from the perspective of its parent component or system, the interface represents the resource supply that the parent guarantees. These interfaces of components are essential for satisfying the property of compositionality.

Resource models such as periodic have been previously defined as interfaces for components in traditional hierarchical systems~\cite{ShLe03}. Analogously, we now present the mixed-criticality periodic resource (MCPR) model  for mixed-criticality components. Since we focus on systems with two criticality levels, we assume that the MCPR model can have at most two criticality levels.
\begin{definition}
A Mixed-Criticality Periodic Resource (MCPR) is defined as $\mathbb{I} = (T, L, \mathcal{C})$, where $T$ denotes the period, $L\in \{ LC, HC \}$ denotes the criticality level, and $\mathcal{C}=\{ C^L, C^H \}$ is a list of resource capacities.  $C^L$ denotes $LC$ resource capacity and $C^H$ denotes $HC$ resource capacity. 
\label{def:mcpr}
\end{definition}

A component $\mathbb{C}$ can be abstracted as an MCPR interface $\mathbb{I} = (T, L, \mathcal{C})$, and the corresponding task $(T, L, \mathcal{C}, T)$ (denoted as \emph{interface task}) represents $\mathbb{C}$ in the workload of its parent component. We assume that period $T$ of this interface is already specified by the system designer as in the standard literature on hierarchical scheduling (e.g., see~\cite{ShLe03}). For instance, this period could be determined based on either component-level requirements or considerations for overheads such as context-switches. The criticality level $L$ is directly determined by the criticality level of the component it is representing. If $\mathbb{C}$ is a $LC$ component, then $L = LC$, otherwise $L = HC$.

\textbf{Mode of the interface.} The semantics of interface $\mathbb{I}$ (and the corresponding interface task) depend on its \emph{criticality mode} at run time, which in turn depends on the criticality mode of  component $\mathbb{C}$. In fact, we assume that the criticality mode of $\mathbb{I}$ is identical to the external mode of $\mathbb{C}$. When $\mathbb{C}$ experiences EMS, the mode of the interface and interface task switches from $LC$ to $HC$. While the interface is in $LC$ mode, it is guaranteed to request no more than $C^L$ time units of resource periodically every $T$ time units from the parent component. But when it switches to $HC$ mode, it can thereafter request up to $C^H$ time units of resource periodically.

\subsection{MCPR Supply Bound Function}
\label{sec:mcpr_sbf}
The supply bound function ($\mbox{sbf}$) of a resource model characterizes the minimum resource supply guaranteed by the model to the underlying component.  In this section, we derive the $\mbox{sbf}$ for a MCPR interface $\mathbb{I} = (T, L, \mathcal{C})$ of a component $\mathbb{C} = (\mathcal{W}, TL, S)$. We let $\mbox{sbf}_{\mathbb{I}} (t_E, t)$ denote the $\mbox{sbf}$ for a time interval of length $t$, where $t_E (\leq t)$ denotes the time instant for EMS of component $\mathbb{C}$. As the resource is supplied periodically, component $\mathbb{C}$ is guaranteed to receive either $C^L$ or $C^H$ units of resource every $T$ time units in $LC$ or $HC$ mode, respectively. We use the following additional notations in this section.

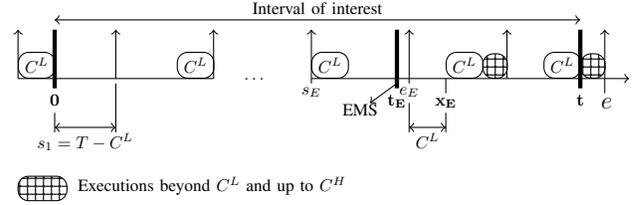
\begin{figure}[tbp]
\begin{tikzpicture}[scale=0.65,transform shape]
\coordinate (A) at (0,0);
\coordinate (A1) at (0,1);
\coordinate (B) at (2,0);
\coordinate (B1) at (2,1);
\coordinate (C) at (4,0);
\coordinate (C1) at (4,1);
\coordinate (D) at (6,0);
\coordinate (D1) at (6,1);
\coordinate (E) at (8,0);
\coordinate (E1) at (8,1);
\coordinate (F) at (10,0);
\coordinate (F1) at (10,1);
\coordinate (F2) at (8.75,0);
\coordinate (G) at (12,0);
\coordinate (G1) at (12,1);
\coordinate (K) at (0.75,0);
\coordinate (L) at (3.25,0);
\coordinate (M) at (6,0);
\coordinate (N) at (8.75,0);
\coordinate (N1) at (9.5,0);
\coordinate (O) at (10.75,0);
\coordinate (O1) at (11.5,0);
\coordinate (R) at (7.75,0);
\coordinate (S) at (11.5,0);
\draw (A) -- (C);
\draw [->] (D) -- (12.5,0);
\node [draw=none] [right =0.5cm of C]{$\ldots$};
\draw [->] (A) -- (A1);
\draw [->] (B) -- (B1);
\draw [->] (C) -- (C1);
\draw [->] (D) -- (D1);
\draw [->] (E) -- (E1);
\draw [->] (F) -- (F1);
\draw [->] (G) -- (G1);
\node [rounded corners,draw=black,minimum width=0.75cm] [above right =0cm of A]{$C^L$};
\node [rounded corners,draw=black,minimum width=0.75cm] [above right =0cm of L]{$C^L$};
\node [rounded corners,draw=black,minimum width=0.75cm] [above right =0cm of M]{$C^L$};
\node [rounded corners,draw=black,minimum width=0.75cm] [above right =0cm of N]{$C^L$};
\node [rounded corners,draw=black,pattern=grid,minimum height=0.5cm,minimum width=0.5cm] [above right =0cm of N1]{};
\node [rounded corners,draw=black,minimum width=0.75cm] [above right =0cm of O]{$C^L$};
\node [rounded corners,draw=black,pattern=grid,minimum height=0.5cm,minimum width=0.5cm] [above right =0cm of O1]{};
\coordinate (W) at (7,0);
\draw [->] (7.75,-0.1) -- (7.2,-0.5);
\node[draw=none] [below =0.4cm of W]{EMS};
\node[draw=none] [below =0.2cm of K]{$\mathbf{0}$};
\node[draw=none] [below =0.2cm of R]{$\mathbf{t_E}$};
\node[draw=none] [below =0.27cm of F2]{$\mathbf{x_E}$};
\node[draw=none] [below =0.2cm of S]{$\mathbf{t}$};
\node[draw=none] [below =0.05cm of D]{$s_E$};
\node[draw=none] [below =0.05cm of E]{$e_E$};
\node[draw=none] [below right=0.27cm and -0.2cm of G]{{\Large $e$}};

\begin{scope}[on background layer]
\draw [ultra thick] (0.75,-0.2) -- (0.75,1);
\draw [ultra thick] (7.75,-0.2) -- (7.75,1);
\draw [ultra thick] (11.5,-0.2) -- (11.5,1);
\end{scope}
\draw [<->] (0.75,1.2) to node [above]{Interval of interest} (11.5,1.2);
\draw [<->] (0.75,-1) to node [below]{$s_1=T-C^L$} (2,-1);
\draw [<->] (8,-1) to node [below]{$C^L$} (8.75,-1);
\draw (0.75,-0.7) -- (0.75,-1.1);
\draw (2,-0.1) -- (2,-1.1);
\draw (6,0) -- (6,-0.1);
\draw (8,-0.5) -- (8,-1.1);
\draw (8,0) -- (8,-0.1);
\draw (8.75,-0.7) -- (8.75,-1.1);
\draw (8.75,0) -- (8.75,-0.3);
\draw (12,0) -- (12,-0.3);
\coordinate (x) at (0,-2);
\coordinate (x1) at (1,-2.5);
\draw [rounded corners, pattern=grid] (x) rectangle (x1);
\node[draw=none] [above right=0cm and 0.1cm of x1]{Executions beyond $C^L$ and up to $C^H$};
\end{tikzpicture}
\caption{MCPR worst-case resource supply pattern~A}
\label{fig:sbf_pattern_2}
\end{figure}
\begin{figure}[tbp]
\begin{tikzpicture}[scale=0.65,transform shape]
\coordinate (A) at (0,0);
\coordinate (A1) at (0,1);
\coordinate (B) at (2,0);
\coordinate (B1) at (2,1);
\coordinate (C) at (4,0);
\coordinate (C1) at (4,1);
\coordinate (D) at (6,0);
\coordinate (D1) at (6,1);
\coordinate (E) at (8,0);
\coordinate (E1) at (8,1);
\coordinate (F) at (10,0);
\coordinate (F1) at (10,1);
\coordinate (F2) at (12.75,0);
\coordinate (G) at (12,0);
\coordinate (G1) at (12,1);
\coordinate (K) at (0.75,0);
\coordinate (L) at (3.25,0);
\coordinate (M) at (7.25,0);
\coordinate (N) at (9.25,0);
\coordinate (O) at (10.75,0);
\coordinate (O1) at (11.5,0);
\coordinate (R) at (10.9,0);
\coordinate (S) at (11.4,0);
\draw (A) -- (C);
\draw [->] (D) -- (12.5,0);
\node [draw=none] [right =0.5cm of C]{$\ldots$};
\draw [->] (A) -- (A1);
\draw [->] (B) -- (B1);
\draw [->] (C) -- (C1);
\draw [->] (D) -- (D1);
\draw [->] (E) -- (E1);
\draw [->] (F) -- (F1);
\draw [->] (G) -- (G1);
\node [rounded corners,draw=black,minimum width=0.75cm] [above right =0cm of A]{$C^L$};
\node [rounded corners,draw=black,minimum width=0.75cm] [above right =0cm of L]{$C^L$};
\node [rounded corners,draw=black,minimum width=0.75cm] [above right =0cm of M]{$C^L$};
\node [rounded corners,draw=black,minimum width=0.75cm] [above right =0cm of N]{$C^L$};
\node [rounded corners,draw=black,minimum width=0.75cm] [above right =0cm of O]{$C^L$};
\node [rounded corners,draw=black,pattern=grid,minimum height=0.5cm,minimum width=0.5cm] [above right =0cm of O1]{};
\node[draw=none] [below =0.5cm of K]{$\mathbf{0}$};
\node[draw=none] [below =0.5cm of R]{$\mathbf{t_E}$};
\node[draw=none] [below =0.5cm of S]{$\mathbf{t}$};
\node[draw=none] [below =0.05cm of F]{$s_E$};
\node[draw=none] [below right=0.05cm and -0.2cm of G]{$e$};
\node[draw=none] [below right=0.3cm and -0.35cm of G]{$(e_E)$};
\begin{scope}[on background layer]
\draw [ultra thick] (0.75,-0.5) -- (0.75,1);
\draw [ultra thick] (10.9,-0.5) -- (10.9,1);
\draw [ultra thick] (11.4,-0.5) -- (11.4,1);
\end{scope}
\draw [<->] (0.75,1.2) to node [above]{Interval of interest} (11.4,1.2);
\draw (10,0) -- (10,-0.1);
\draw (12,0) -- (12,-0.1);
\end{tikzpicture}
\caption{Boundary case for MCPR worst-case resource supply pattern A}
\label{fig:sbf_pattern_boundary}
\end{figure}
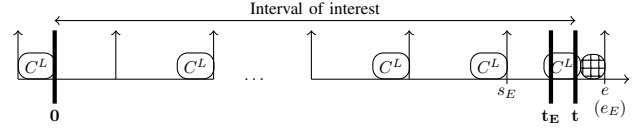

\begin{itemize}
\item $s_1$  denotes the start time of the first interface period within time interval $[0,t)$.
\item $n$ denotes the number of interface periods within interval $[0, t)$.
\item  $n_E$ denotes the number of interface periods within interval $[0, t_E)$.
\item $s_E$ denotes the start of a interface period that experiences EMS ($t_E$), i.e., $s_E\leq t_E<e_E$, where $e_E=s_E+T$.
\item $e$ denotes the start of interface period after $t$, i.e., $e=s_1+n \times T+T$.
\item For simplicity of presentation, we also use the short-cut notation $[x]_0 =\max\{ 0, x \}$.
\end{itemize}

When $t_E = t$, there is no external mode switch for component $\mathbb{C}$ in the interval of interest, and the component and interface are only executing in $LC$  mode. Therefore, $\mbox{sbf}_{\mathbb{I}} (t_E, t)$ in this case is identical to the $\mbox{sbf}$ defined for periodic resource models with $\mathbb{I}$ supplying $C^L$ units of resource periodically~\cite{ShLe03}. Thus, in this case, minimal resource is supplied when $s_1=T-C^L$ and $n = \left [ \left \lfloor \frac{t-(T-C^L)}{T} \right \rfloor \right ]_0$. We record this $\mbox{sbf}$ in the following equation.
\begin{equation}
\mbox{sbf}_{\mathbb{I}}(t_E,~t)\! =\! n\!  \times \! C^L\! + \! \left[t\! -\! 2(T-C^L)\! - n\!  \times T \right]_0~\mbox{ If }t_E=t
\label{eqn:sbf_pattern_1}
\end{equation}
For the case when $t_E<t$, there are two possible resource supply patterns, denoted A and B, that can lead to the minimum resource supply. We now present these two patterns and the corresponding $\mbox{sbf}$ equations, $\mbox{sbf}_{\mathbb{I}}(t_E, t)_{[A]}$ and $\mbox{sbf}_{\mathbb{I}}(t_E, t)_{[B]}$.

\textbf{Pattern A:} $s_1=T-C^L$. The scenario of pattern A  is shown in Figure~\ref{fig:sbf_pattern_2}, where $n_E=\left [ \left \lfloor \frac{t_E-(T-C^L)}{T} \right \rfloor \right ]_0$ and $n = \left [ \left \lfloor \frac{t-(T-C^L)}{T} \right \rfloor \right ]_0$. In the first period, $C^L$ units of resource are supplied as early as possible and hence during $[0,2\times (T-C^L)]$, no resource is supplied. In the following periods until time instant $s_E (= n_E \times T + T - C^L)$, $C^L$ units are supplied as late as possible. In the period $[s_E,e_E]$, the amount of supply depends on the distance of $t_E$ from $s_E$. If $t_E - s_E < C^L$, then the resource supply in this period cannot be exhausted when component $\mathbb{C}$ has EMS at $t_E$. Therefore interface $\mathbb{I}$ will provide $C^H$ units of resource in this period, because it can signal its mode switch to the parent component. On the other hand, if $t_E - s_E \geq C^L$ as in the example figure, then the resource supply in $[s_E,e_E]$ can be exhausted before component $\mathbb{C}$ experiences EMS, and hence the interface may only provide $C^L$ units in this period. After time instant $e_E$, the interface is guaranteed to provide $C^H$ units of resource in every period. An important boundary case to consider is when $e = n \times T + 2T - C^L = e_E$ and $t_E-s_E \geq C^L$. That is, when $t_E$ and $t$ are in the same period and the interface can exhaust its resource supply before EMS of component $\mathbb{C}$ (scenario shown in Figure~\ref{fig:sbf_pattern_boundary}). In this case, the minimum supply in this period can happen when it is provided as late as possible (for instance when $e - t > C^H - C^L$). We record the $\mbox{sbf}$ corresponding to the pattern of Figures~\ref{fig:sbf_pattern_2} and~\ref{fig:sbf_pattern_boundary} below.
{\small
\begin{align}
\nonumber & \mbox{sbf}_{\mathbb{I}}(t_E, t)_{[A]} = \\
& \begin{cases}
n_E \times C^L + (n - n_E) \times C^H \\
+ \left[t - (2T-C^L-C^H) - n \times T \right]_0 & t_E - s_E < C^L \\
&\\
(n_E + 1) \times C^L + (n - n_E - 1) \times C^H \\
+ \left[t - (2T-C^L-C^H) - n \times T \right]_0 & e \neq e_E \wedge \\ & t_E - s_E \geq C^L \\
&\\
n_E \times C^L \\
+ \min \left \{ C^L, \left[t - (2T-C^L-C^H) - n \times T \right]_0 \right \}  & e = e_E \wedge \\ & t_E - s_E \geq C^L
\end{cases} \label{eqn:sbf_pattern_2}
\end{align}
}

\textbf{Pattern B}: $s_1=T-C^L-(x_E-t_E)$, where $x_E = \left \lceil \frac{t_E}{T} \right \rceil \times T$.  Scenario of pattern B is shown in Figure~\ref{fig:sbf_pattern_3}, which is obtained by shifting pattern A in Figure~\ref{fig:sbf_pattern_2} by $x_E - t_E$. In this case,
{\small
\begin{align*}
& n_E = \left [ \left \lfloor \frac{t_E-s_1}{T} \right \rfloor \right ]_0, n = \left [ \left \lfloor \frac{t-s_1}{T} \right \rfloor \right ]_0, e_E = t_E - C^L + T \\
& \mbox{and } e = n \times T  + T + s_1.
\end{align*}
}
The $\mbox{sbf}$ corresponding to this shifted supply pattern is given below. It is similar to the previous case, except that the interface period containing $t_E$ is now guaranteed to supply no more than $C^L$ time units.
{\small
\begin{align}
\nonumber & \mbox{sbf}_{\mathbb{I}}(t_E, t)_{[B]} = \\
& \begin{cases}
(n_E + 1) \times C^L + (n - n_E - 1) \times C^H \\
+ \left[t - s_1 - (T-C^H) - n \times T \right]_0 & e \neq e_E \\
&\\
n_E \times C^L \\
+ \min \left \{ C^L, \left[t - s_1 - (T-C^H) - n \times T \right]_0 \right \}  & e = e_E \\
\end{cases} \label{eqn:sbf_pattern_3}
\end{align}
}

The following lemma proves that it is sufficient to consider the above two supply patterns for determining the $\mbox{sbf}$. 
\begin{lemma}
\label{lemma:sbfcase}
When $t_E < t$, pattern A or B are the only two possible supply patterns that can result in the minimal resource supply from interface $\mathbb{I}$.
\end{lemma}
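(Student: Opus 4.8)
The plan is to begin from the classical characterization of the worst-case supply pattern for a single-mode periodic resource and then isolate the one genuinely new phenomenon introduced by mixed-criticality, namely the behaviour of the interface period that straddles the external mode switch $t_E$. First I would invoke the standard result for the periodic resource model of~\cite{ShLe03}: in the absence of any mode switch, the minimum supply over $[0,t)$ is obtained by supplying the budget of the period straddling time $0$ as early as possible (so that its supply ends exactly at $0$ and a maximal blackout of length $2(T-C^L)$ appears at the left end of the interval) and deferring the budget of every subsequent period as late as possible. An exchange argument shows any supply pattern can be transformed into this ``straddling-early, subsequent-late'' shape without increasing the supply in $[0,t)$, because the budget of every full period interior to the interval is counted regardless of its placement, while the placement in the straddling and terminal periods only determines how much of the budget is pushed outside $[0,t)$. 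This pins the within-period placement of every chunk and reduces the search from arbitrary supply patterns to a one-parameter family indexed by the phase $s_1$ of the periodic grid relative to the interval.

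Next I would analyse how the total supply depends on this phase. As $s_1$ ranges over $[0,T)$, the supply is a piecewise-linear function whose breakpoints occur exactly when a period boundary crosses one of the distinguished instants $0$, $t_E$ or $t$ (changing $n$ or $n_E$), or when the offset $t_E - s_E$ of $t_E$ inside its period crosses $C^L$. This last event is the mixed-criticality one: by the construction of the supply in the period $[s_E, e_E]$, the interface is forced to contribute only $C^L$ rather than $C^H$ in that period precisely once $t_E - s_E \geq C^L$, since the $C^L$ budget can then be exhausted before the component signals its EMS. I would then identify the two candidate phases. The phase $s_1 = T - C^L$ is the standard worst-case alignment that keeps the left-boundary blackout maximal, yielding Pattern A and Equation~\ref{eqn:sbf_pattern_2}, where the two sub-cases $t_E - s_E < C^L$ and $t_E - s_E \geq C^L$ reflect whether the EMS period is generous or stingy. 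The phase $s_1 = T - C^L - (x_E - t_E)$, with $x_E = \lceil t_E / T \rceil \times T$, is obtained from Pattern A by the minimal left shift that drives $t_E - s_E$ up to exactly $C^L$ (equivalently $s_E = t_E - C^L$, $e_E = t_E - C^L + T$), thereby extracting the full $C^H - C^L$ penalty at the EMS period; this yields Pattern B and Equation~\ref{eqn:sbf_pattern_3}. Intuitively, one pattern keeps the boundary blackout maximal while possibly conceding a generous EMS period, and the other sacrifices some boundary blackout to force the EMS period to be stingy.

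To close the argument I would show, segment by segment between consecutive breakpoints, that the supply is monotone in $s_1$, so its minimum over each linear piece is attained at a breakpoint, and that every breakpoint other than the two named phases is dominated: shifting left beyond Pattern B merely re-grows the initial and terminal supply without producing any additional EMS saving, while shifting right of Pattern A loses initial blackout, so neither direction can undercut both A and B. I expect the main obstacle to be this final domination step, since it requires accounting simultaneously for the change in the initial blackout, the change in the terminal partial term $[t - (2T - C^L - C^H) - n\times T]_0$, and the discrete $C^H - C^L$ jump at the EMS period, and then verifying that the net effect of any admissible shift is non-negative relative to whichever of A or B is the relevant endpoint. The boundary case $e = e_E$ of Figures~\ref{fig:sbf_pattern_2} and~\ref{fig:sbf_pattern_boundary}, where $t_E$ and $t$ fall in the same period so that the EMS period is also the terminal partial period and the $\min\{C^L, \cdot\}$ term governs the supply, would need a separate check to confirm it still reduces to Pattern A or Pattern B.
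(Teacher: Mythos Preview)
Your proposal is correct and follows essentially the same route as the paper. The paper also reduces to the single phase parameter $s_1$ and then performs a three-case comparison---$s_1$ to the right of the Pattern~A value, $s_1$ strictly between the Pattern~B and Pattern~A values, and $s_1$ to the left of the Pattern~B value---arguing in each range that the supply cannot drop below the nearer of $\mbox{sbf}_{\mathbb{I}}(t_E,t)_{[A]}$ or $\mbox{sbf}_{\mathbb{I}}(t_E,t)_{[B]}$; this is precisely your segment-by-segment monotonicity/domination argument, only stated more tersely and without your explicit exchange-argument justification for fixing the within-period placement first.
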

\begin{proof}
\ae{
Suppose there exists a $s_1\in[0,T)$ such that $s_1\neq T-C^L$ (pattern A) and $s_1\neq T-C^L-(x_E-t_E)$ (pattern B), but $s_1$ leads to the minimal supply pattern for  time interval length $t$. \textbf{Case 1} ($s_1=T-C^L+\epsilon| 0<\epsilon\leq C^L$): In this case,  it is easy to see that the supply will be greater than or equal to $\mbox{sbf}_{\mathbb{I}}(t_E, t)_{[A]}$, because the supply for the first interface period will increase by $\epsilon$ and the supply for the last interface period will decrease by at most $\epsilon$. \textbf{Case 2} ($s_1=T-C^L-(x_E-t_E)+\epsilon| 0<\epsilon<(x_E-t_E)$): In this case, the supply for the interface period containing $t_E$ will stay the same or increase  by $C^H-C^L$ while the  supply for the last interface period may decrease by at most $\epsilon$ compared with the case when $s_1=T-C^L-(x_E-t_E)$. Therefore this supply is also minimized when $\epsilon \rightarrow x_E-t_E$ or $\epsilon \rightarrow 0$.  \textbf{Case 3} ($s_1=T-C^L-(x_E-t_E)-\epsilon|0<\epsilon\leq T-C^L-(x_E-t_E)$): In this case, the supply for the interface period containing $t_E$ will stay the same, while the supply for the last interface period may stay the same or increase compared with the case when $s_1=T-C^L-(x_E-t_E)$. Therefore in this case as well, the supply  is minimized when $\epsilon \rightarrow 0$. Combining the above cases, we can conclude that the supply is  minimized with either pattern A or pattern B.}
\end{proof}
Thus, a safe lower bound for $\mbox{sbf}_{\mathbb{I}}$ for the case when $t_E < t$ can be stated as follows.
\begin{equation}
\label{eqn:sbfAB}
\mbox{sbf}_{\mathbb{I}}(t_E, t)=\min\left\{\mbox{sbf}_{\mathbb{I}}(t_E, t)_{[A]},~\mbox{sbf}_{\mathbb{I}}(t_E, t)_{[B]} \right\}
\end{equation}

\begin{figure}[tbp]
\begin{tikzpicture}[scale=0.65,transform shape]
\coordinate (A) at (0,0);
\coordinate (A1) at (0,1);
\coordinate (B) at (2,0);
\coordinate (B1) at (2,1);
\coordinate (C) at (4,0);
\coordinate (C1) at (4,1);
\coordinate (D) at (6,0);
\coordinate (D1) at (6,1);
\coordinate (E) at (8,0);
\coordinate (E1) at (8,1);
\coordinate (F) at (10,0);
\coordinate (F1) at (10,1);
\coordinate (F2) at (8.75,0);
\coordinate (G) at (12,0);
\coordinate (G1) at (12,1);

\coordinate (K) at (1.5,0);
\coordinate (L) at (3.25,0);
\coordinate (M) at (7.25,0);
\coordinate (N) at (8,0);
\coordinate (O) at (10.75,0);
\coordinate (O1) at (11.5,0);

\coordinate (R) at (7,0);
\coordinate (S) at (12.25,0);

\draw (A) -- (C);
\draw [->] (D) -- (12.5,0);
\node [draw=none] [right =0.5cm of C]{$\ldots$};

\draw [->] (A) -- (A1);
\draw [->] (B) -- (B1);
\draw [->] (C) -- (C1);
\draw [->] (D) -- (D1);
\draw [->] (E) -- (E1);
\draw [->] (F) -- (F1);
\draw [->] (G) -- (G1);

\node [rounded corners,draw=black,minimum width=0.75cm] [above right =0cm of A]{$C^L$};
\node [rounded corners,draw=black,minimum width=0.75cm] [above right =0cm of L]{$C^L$};
\node [rounded corners,draw=black,minimum width=0.75cm] [above right =0cm of M]{$C^L$};
\node [rounded corners,draw=black,minimum width=0.75cm] [above right =0cm of N]{$C^L$};
\node [rounded corners,draw=black,minimum width=0.75cm] [above right =0cm of O]{$C^L$};
\node [rounded corners,draw=black,pattern=grid,minimum height=0.5cm,minimum width=0.5cm] [above right =0cm of O1]{};

\node[draw=none] [below =0.2cm of K]{$\mathbf{0}$};
\node[draw=none] [below =0.2cm of F2]{$\mathbf{t_E}$};
\node[draw=none] [below =0.2cm of S]{$\mathbf{t}$};
\node[draw=none] [below =0.05cm of F]{$e_E$};
\node[draw=none] [below =0.05cm of G]{$e$};

\begin{scope}[on background layer]
\draw [ultra thick] (1.5,-0.2) -- (1.5,1);
\draw [ultra thick] (8.75,-0.2) -- (8.75,1);
\draw [ultra thick] (12.25,-0.2) -- (12.25,1);
\end{scope}

\draw [<->] (1.5,1.2) to node [above]{Interval of interest} (12.25,1.2);

\draw [<->] (0.75,-1) to node [below]{{\small $x_E-t_E$}} (1.5,-1);
\draw [<->] (1.5,-0.15) to node [below]{} (2,-0.15);
\draw [<-] (1.75,-0.17) to node [below right]{$s_1$} (2,-0.5);

\draw (0.75,-0.1) -- (0.75,-1.1);
\draw (1.5,-0.7) -- (1.5,-1.1);
\draw (2,0) -- (2,-0.2);
\draw (10,0) -- (10,-0.1);
\end{tikzpicture}
\caption{MCPR worst-case resource supply pattern~B}
\label{fig:sbf_pattern_3}
\end{figure}

\subsection{Interface Generation}
\label{sec:interface}
In this section we use the $\mbox{sbf}$, together with the $\mbox{dbf}$ of component $\mathbb{C}$, to generate interface $\mathbb{I}$. For component $\mathbb{C}$ to be schedulable using interface $\mathbb{I}$, it is sufficient to ensure that $\mbox{dbf}(\mathbb{C},t,t_E,t_I) \leq \mbox{sbf}_{\mathbb{I}}(t_E,t)$ for various time interval lengths. Below we first present the schedulability test for the case when component $\mathbb{C}$ does not experience EMS. That is, the interface only executes in $LC$ mode supplying $C^L$ resource capacity periodically.
\begin{theorem}
A mixed-criticality component  $\mathbb{C}$ is schedulable  in $LC$ mode with $\mbox{sbf}_{\mathbb{I}}(t_E=t, t)$ if,
$\forall t: 0 \leq t \leq t_{MAX}, \forall t_I: 0 \leq t_I \leq t$, \\
\begin{equation}	
\mbox{dbf}(\mathbb{C},t,t_E,t_I)  \leq \mbox{sbf}_{\mathbb{I}}(t_E=t, t)~\mbox{ If }t_E=t
\label{eqn:thm_lc}
\end{equation}
where $t_{MAX}$ is a pseudo-polynomial in the size of the input that can be derived using similar techniques in Section~\ref{sec:tmax} in the appendix, and $\mbox{sbf}_{\mathbb{I}}(t_E, t)$ is given by Equation~\eqref{eqn:sbf_pattern_1} in Section~\ref{sec:mcpr_sbf}.
\label{thm:thm_lc}
\end{theorem}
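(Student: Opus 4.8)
The plan is to prove this by contradiction using a standard demand--supply busy-interval argument, specialized to the case $t_E = t$ in which interface $\mathbb{I}$ never experiences EMS and hence behaves exactly like an ordinary periodic resource supplying $C^L$ every $T$ time units. In this regime the supply is lower-bounded by the worst-case pattern $s_1 = T - C^L$ already recorded in Equation~\eqref{eqn:sbf_pattern_1}, so I can treat $\mbox{sbf}_{\mathbb{I}}(t_E=t,t)$ as a known safe lower bound on the resource delivered to $\mathbb{C}$ over any window of length $t$. The demand side is supplied by the component $\mbox{dbf}$ constructed in Section~\ref{sec:dbfcomponent}: since $t_E = t$, every $HC$ task is assumed to switch only at the very end of the interval (generating $LC$-like demand), except for at most $TL$ tasks in the set $\mathcal{G}$ that may switch internally at $t_I$ and contribute the extra demand $\Delta_i$. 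Thus $\mbox{dbf}(\mathbb{C},t,t_E=t,t_I)$ already encodes the isolation guarantee we wish to certify, namely that up to $TL$ internal mode switches can be absorbed within the $LC$ supply.

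First I would assume, for contradiction, that $\mathbb{C}$ is not schedulable in $LC$ mode, i.e.\ some job that is required to meet its deadline misses it at an instant $t_d$. I would then let $t_0 \le t_d$ be the start of the maximal busy interval ending at $t_d$, chosen so that immediately before $t_0$ there is no backlog of the component's workload and every unit of resource consumed in $[t_0, t_d)$ is devoted to jobs released in $[t_0, t_d)$ whose deadlines are at most $t_d$. Setting $t = t_d - t_0$ and letting $t_I$ be the IMS instant measured from $t_0$, the cumulative execution requirement of these active jobs is at most $\mbox{dbf}(\mathbb{C},t,t_E=t,t_I)$ --- here I would invoke Lemma~\ref{lemma:max} to justify that maximizing over the switch instants reduces to the extremes used in the component $\mbox{dbf}$, and the $LC$-task terms account for jobs dropped after $t_I$. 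The resource delivered in $[t_0, t_d)$ is at least $\mbox{sbf}_{\mathbb{I}}(t_E=t,t)$. Because a deadline miss means the demand strictly exceeds the delivered supply, I obtain $\mbox{dbf}(\mathbb{C},t,t_E=t,t_I) > \mbox{sbf}_{\mathbb{I}}(t_E=t,t)$ for this particular $t$ and $t_I$, directly contradicting the hypothesis of the theorem.

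The hard part will be making the busy-interval definition watertight, so that the carry-over special jobs $J_i^A$ and $J_i^B$ are charged to the interval exactly as in Equations~\eqref{eqn:carryoverjob} and~\eqref{eqn:unnecessary} and no demand from before $t_0$ leaks in; in particular I must argue that any job active across $t_0$ either has no residual requirement (by maximality of the busy interval) or is the designated carry-over job whose partial demand is already bounded by the $\mbox{dbf}$ construction. I also need the alignment of $t_I$ with the first job releases to respect the $LC$-dropping semantics and the selection of the $TL$ elevated tasks in $\mathcal{G}$, so that the per-task bounds aggregate to the stated component $\mbox{dbf}$. Finally, to turn the ``for all $t$'' condition into a finite check, I would restrict attention to $t \in [0, t_{MAX}]$ using the pseudo-polynomial bound developed in Section~\ref{sec:tmax} of the Appendix: beyond $t_{MAX}$ the periodic $LC$ supply grows at rate $C^L/T$ while the demand accumulates at the (smaller) long-run utilization of the workload, so the inequality holds automatically once $t$ is large enough, and only the finitely many candidate interval lengths up to $t_{MAX}$ must be examined.
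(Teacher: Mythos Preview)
Your proposal is correct and follows the standard demand--supply contradiction argument; it is in fact considerably more detailed than what the paper offers. The paper does not give an explicit proof of this theorem: it simply states the condition and moves on to discuss computing $C^L$, relying on the one-sentence justification given before Theorem~\ref{theorem:test1} (``Suppose there is a first deadline miss \ldots\ then the total maximum demand \ldots\ must be greater than $t$'') together with the replacement of the processor supply $t$ by $\mbox{sbf}_{\mathbb{I}}(t_E=t,t)$. Your busy-interval construction, the handling of carry-over jobs, and the invocation of the $t_{MAX}$ bound make explicit precisely the steps the paper leaves implicit, so there is no substantive divergence in approach.
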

For a given $t$ and $t_I$, $\mbox{dbf}(\mathbb{C},t,t_E,t_I)$ can be computed using techniques described in Section~\ref{sec:dbfcomponent}. The only unknown quantity in Equation~\eqref{eqn:thm_lc} is the $LC$ resource capacity $C^L$. This capacity can then be computed exactly using existing techniques~\cite{EAL07}.

To compute the $HC$ resource capacity $C^H$, we need to consider the schedulability test when component $\mathbb{C}$ experiences EMS at some time instant $t_E (< t)$. The following theorem presents this test.
\begin{theorem}
A mixed-criticality component  $\mathbb{C}$ is schedulable in $HC$ mode with $\mbox{sbf}_{\mathbb{I}}(t_E, t)$ if
$\forall t: 0 \leq t \leq t_{MAX}, \forall t_E: 0 \leq t_E \leq t, \forall t_I: 0 \leq t_I \leq t_E$, 
\begin{equation}
\mbox{dbf}(\mathbb{C},t,t_E,t_I)  \leq \mbox{sbf}_{\mathbb{I}}(t_E, t) 
\label{eqn:thm_hc}
\end{equation}
where $\mbox{sbf}_{\mathbb{I}}(t_E, t)$ is given by Equation~\eqref{eqn:sbfAB} in Section~\ref{sec:mcpr_sbf}.
\label{thm:thm_hc}
\end{theorem}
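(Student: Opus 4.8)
The plan is to establish the contrapositive via a standard processor-demand (busy-period) argument, mirroring the proof of the flat-framework test in Theorem~\ref{thm:thm_hc1} but with the interface supply $\mbox{sbf}_{\mathbb{I}}(t_E,t)$ playing the role that the interval length $t$ played there. I would assume $\mathbb{C}$ is not schedulable in $HC$ mode, so there is a concrete legal run --- a release pattern and execution requirements consistent with the task model, an IMS and EMS consistent with the $HC$ tolerance limit $TL$, and a minimal interface supply phasing --- in which some job misses its deadline. Let $t_d$ be the earliest such miss, and let $t_0 \leq t_d$ be the latest instant before $t_d$ at which there is no pending workload arising from jobs whose deadlines are at most $t_d$. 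Relabelling the interval as $[0,t)$ with $t = t_d - t_0$, I would argue, as is usual for constrained-deadline EDF, that the choice of $t_0$ forbids any carry-in demand with deadline $\leq t_d$ and that over $[0,t)$ every unit of supplied resource is consumed by jobs with deadline $\leq t$, so a miss at $t_d$ forces the demand of those jobs to strictly exceed the supply received in $[0,t)$.

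Next I would instantiate the quantifiers of the theorem with the values realized by this run: take $t_E$ to be the EMS instant and $t_I$ the IMS instant, both measured relative to $t_0$, which satisfy $0 \leq t_I \leq t_E \leq t$ by the execution semantics. The demand of the jobs with deadline $\leq t$ is bounded above by $\mbox{dbf}(\mathbb{C},t,t_E,t_I)$: this is precisely the content of the component demand bound function from Section~\ref{sec:dbfcomponent}, where Lemma~\ref{lemma:max} shows each $HC$ task is worst-cased by switching at $t_I$ or $t_E$, and the set $\mathcal{G}$ of the $TL$ largest $\Delta_i$ captures the most demanding admissible choice of which tasks switch early. Symmetrically, the supply actually delivered in $[0,t)$ is bounded below by $\mbox{sbf}_{\mathbb{I}}(t_E,t)$, since Lemma~\ref{lemma:sbfcase} establishes that patterns A and B are the only candidates for the minimal supply and Equation~\eqref{eqn:sbfAB} takes their minimum. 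Chaining these gives $\mbox{dbf}(\mathbb{C},t,t_E,t_I) \geq \text{demand} > \text{supply} \geq \mbox{sbf}_{\mathbb{I}}(t_E,t)$, which violates the hypothesis for this particular $(t,t_E,t_I)$ and yields the contradiction. Finally I would invoke the $t_{MAX}$ bound (Section~\ref{sec:tmax}) to argue that restricting to $t \leq t_{MAX}$ loses nothing, since beyond the bounded busy period the inequality cannot first fail.

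The main obstacle I anticipate is justifying that the demand adversary and the supply adversary can be combined: the $\mbox{dbf}$ is the maximum demand over release patterns and early-switching choices for a fixed $(t_E,t_I)$, while the $\mbox{sbf}_{\mathbb{I}}$ is the minimum supply over interface phasings for the same $(t_E,t_I)$, and I must confirm that these two worst cases are simultaneously realizable in one run --- they are, because the releases and mode switches on one side and the interface phasing on the other are independent degrees of freedom. A related delicacy is the carry-in handling at $t_0$: I must verify that the job straddling $t_0$ contributes no demand with deadline $\leq t_d$, so that the per-task $\mbox{dbf}$ --- whose own carry-over treatment via $J_i^A$ and $J_i^B$ concerns the mode-switch boundary at $t_i$, not the interval start --- remains a valid upper bound without double counting. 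The remaining ingredients, namely the per-task and per-component demand bounds and the supply lower bound, are already discharged by the cited lemmas, so the argument reduces to assembling them around the busy-period interval.
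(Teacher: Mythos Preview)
Your proposal is correct and is precisely the standard processor-demand argument that underlies such $\mbox{dbf}$/$\mbox{sbf}$ schedulability tests. The paper itself does not spell out a proof of this theorem at all: it states the result as an immediate consequence of the demand and supply bounds developed in Sections~\ref{sec:dbfcomponent} and~\ref{sec:mcpr_sbf}, in the same way that Theorem~\ref{thm:thm_hc1} is asserted after a one-sentence justification. So there is nothing to compare against beyond noting that your busy-period reconstruction is the argument the paper is implicitly invoking.

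One small clarification: your anticipated obstacle about the demand and supply adversaries being ``simultaneously realizable'' is not actually needed. The theorem is a sufficiency statement, so the chain you want is simply
\[
\text{actual demand} \leq \mbox{dbf}(\mathbb{C},t,t_E,t_I) \leq \mbox{sbf}_{\mathbb{I}}(t_E,t) \leq \text{actual supply},
\]
where the outer inequalities hold for the \emph{given} run because $\mbox{dbf}$ is an upper bound over all release patterns and $\mbox{sbf}$ is a lower bound over all interface phasings. You never need a single run that attains both extremes; that would only matter if you were proving the test tight. The independence observation you make is true but superfluous for this direction.
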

The only unknown quantity in Equation~\eqref{eqn:thm_hc} is the $HC$ resource capacity $C^H$, assuming we have already computed $C^L$ using Theorem~\ref{thm:thm_lc}. $C^H$ can then be computed similar to $C^L$ using existing techniques~\cite{EAL07}.

\section{Evaluation}
\label{sec:evaluation}
In this section we evaluate the performance of the proposed mechanism in terms of offline schedulability as well as its ability to support $LC$ task executions online. Tasksets are generated using the following settings, where each parameter is randomly drawn from the given range using an uniform distribution.
\begin{itemize}
\itemsep=-0.3pt
\label{set:1}
	\item $u_i^L = C_i^L/T_i$ is in the range $[0.02,0.1]$.
	\item $C_i^H/C_i^L$ is in the range $[2,3]$.
	\item $T_i$ is in the range $[10,150]$.
  \item $D_i=T_i$ as service adaption strategy, one of the mechanisms being compared, can only support implicit deadline tasks.
	\item Task $\tau_i$ is deemed to be a $HC$ task with probability $0.5$.
	\item For a $HC$ task $\tau_i$, $D_i^L$ is determined by the deadline tuning algorithm in \cite{Eas13}.
  \item For the proposed mechanism, we assume that all the $|H|$ $HC$ tasks in the generated taskset are allocated to a $HC$ component $\mathbb{C}_H=\{\mathcal{W}_H,TL_H\}$, and all the $LC$ tasks are allocated to a $LC$ component $\mathbb{C}_L$.
\end{itemize}
We have chosen relatively small values for $u_i^L$ and $C_i^H/C_i^L$ so that sufficient number of $HC$ tasks are generated. This enables us to evaluate the online performance of various approaches when different number of $HC$ tasks synchronously switch to $HC$ mode. The generated taskset is evaluated for offline schedulability as well as online performance in terms of support for $LC$ execution  under four different mechanisms. These include the mechanism presented in this paper (``Proposed Mechanism''), service adaptation strategy~\cite{PCH14} (``Service Adaptation''), Interference Constraint Graph~\cite{PCH13} (``ICG''), and the classical mixed-criticality studies in which all the $LC$ jobs are dropped at the moment any $HC$ job switches to $HC$ mode~\cite{Eas13} (``Classical Model''). Note that the classical model can be obtained by setting $TL_H=0$ in our mechanism. In Section~\ref{sec:offline} we present our results for offline performance based on schedulability tests, and in Section~\ref{sec:online} we compare their online performance through simulations. 
\subsection{Offline Schedulability}
\label{sec:offline}
In order to generate feasible tasksets, we consider different  bounds for the term $\max \{ U_L^L + U_H^L, U_H^H \}$, where $U^L_L =\sum\limits_{L_i=LC}{C_i^L}/{T_i}$, $U_H^L = \sum\limits_{L_i=HC}{C_i^L}/{T_i}$ and $U_H^H = \sum\limits_{L_i=HC}{C_i^H}/{T_i}$. For each bound value, we generate $1000$ tasksets based on the procedure described above, and evaluate their off-line schedulability. For the elastic model~\cite{Su13} in which the $LC$ task periods are extended, any generated taskset with $U_H^H$ is always schedulable, because in the worst-case all the $LC$ task periods can be extended to infinity.  The schedulability test for the service adaption strategy~\cite{BBA12}  is a utilization based test. ICG uses the well known Audsley's algorithm to assign task priorities, and its schedulablity is maximized when the interference graph is fully connected, i.e., each $HC$ task has an execution dependency with every $LC$ task in the system. For our mechanism, if a hierarchical scheduling framework is considered, then we assume that the MCPR interface period $T$ for both  $\mathbb{C}_H$ and  $\mathbb{C}_L$ is equal to $5$ time units. This is reasonable because the smallest task period in any taskset is $10$ time units.

\begin{figure*}[t]
  \centering
  \begin{minipage}[t]{.32\linewidth}
    \includegraphics[width=2.2in, height=2.0in]{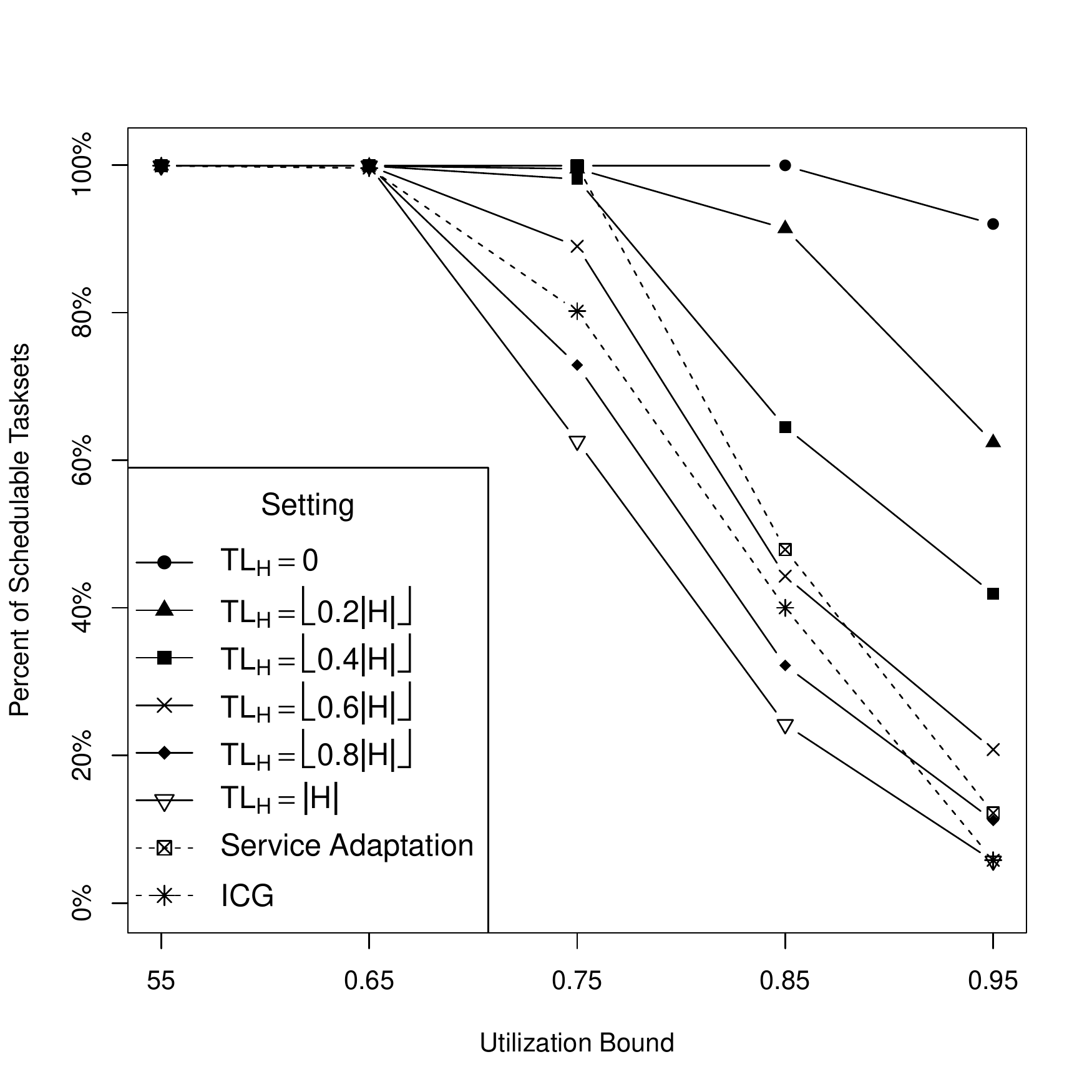}
  \caption{Schedulability under a Flat Scheduling Framework}
      \label{fig:subfig:a}
  \end{minipage}
\begin{minipage}[t]{.32\linewidth}
  \includegraphics[width=2.2in, height=2.0in]{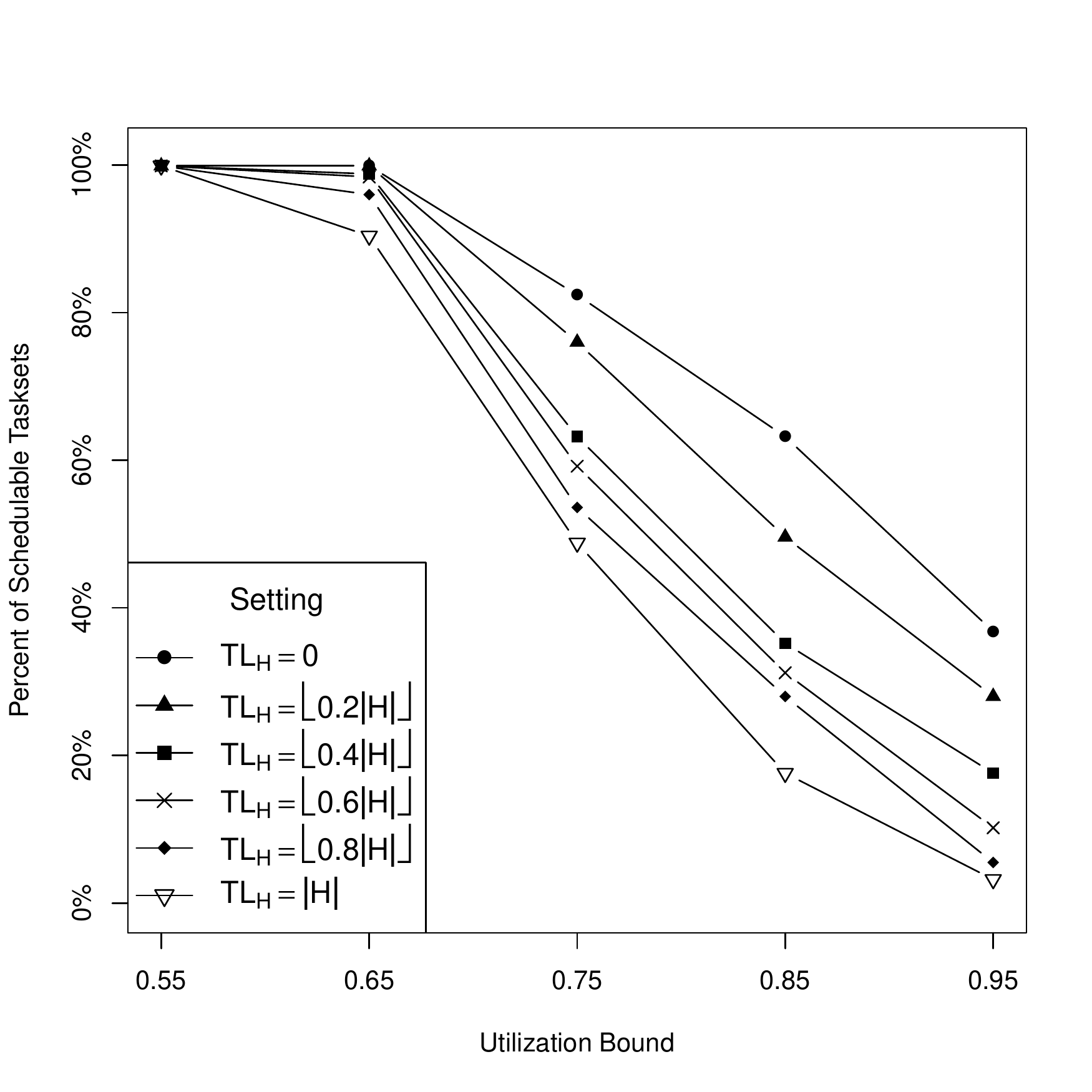}
  \caption{Schedulability under a Hierarchical Scheduling Framework}
    \label{fig:subfig:b}
  \end{minipage}
    \begin{minipage}[t]{.32\linewidth}
    \includegraphics[width=2.2in, height=2.0in]{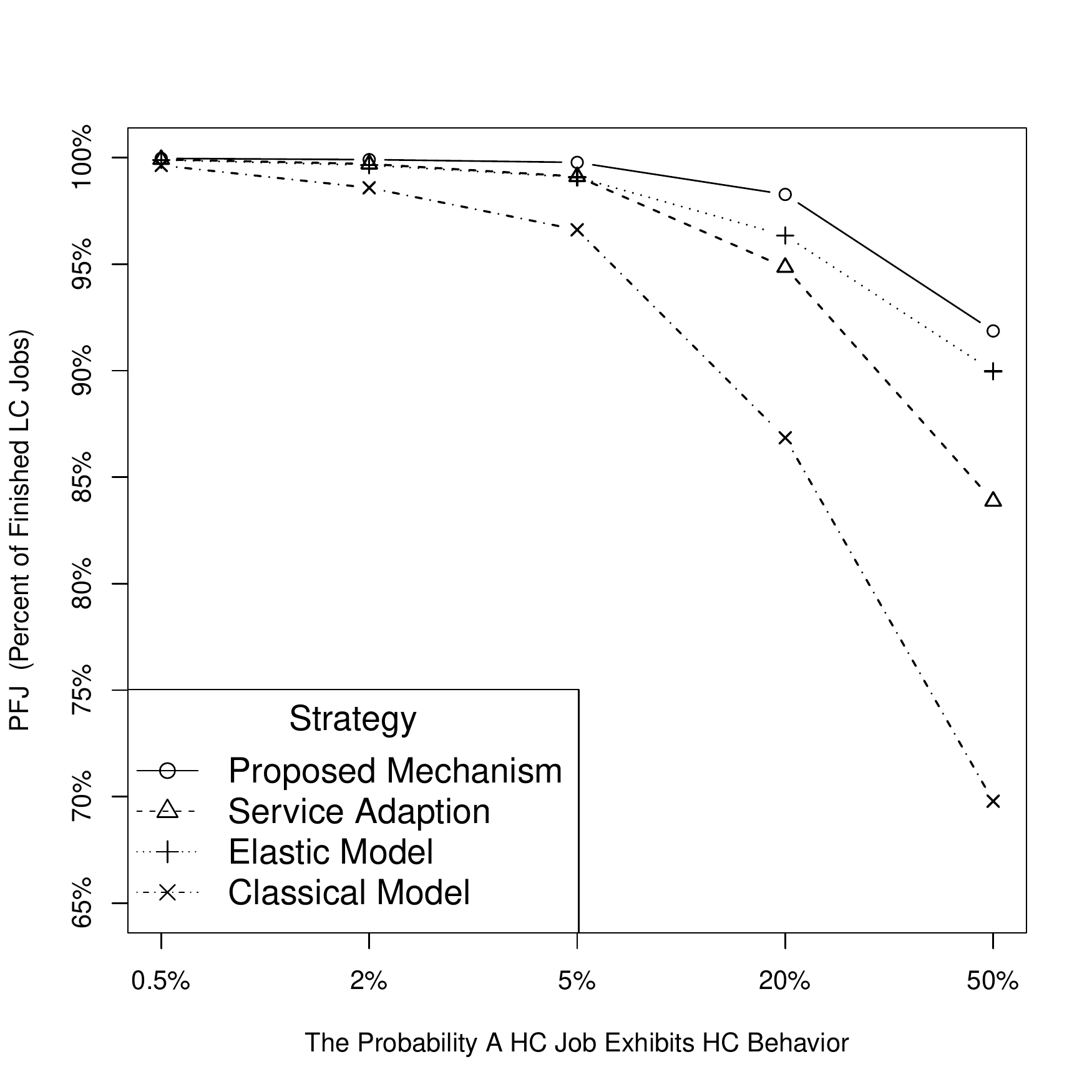}
  \caption{$\max\{ U^L_L + U_H^L, U_H^H \} = 0.8$}
\label{fig:p1}
  \end{minipage}
\end{figure*}

Figures~\ref{fig:subfig:a} and \ref{fig:subfig:b} show the schedulability performance for the tasksets  under various mechanisms. In Figures~\ref{fig:subfig:a} we present results for our mechanism under a flat scheduling framework, and in In Figures~\ref{fig:subfig:a} we present results for our mechanism under a hierarchical scheduling framework. In these figures, the x-axis denotes the bound value for $\max \{ U_L^L + U_H^L, U_H^H \}$, and the y-axis denotes schedulability ratio, i.e., percentage of tasksets deemed schedulable by the different mechanisms. For our mechanism, we generate the schedulability results for various values of the tolerance limit: $TL_H=0,~\lfloor 0.2|H|\rfloor,~\lfloor 0.4|H|\rfloor,~\lfloor 0.6|H|\rfloor,~\lfloor 0.8|H|\rfloor\mbox{ and } |H|$.

As shown in Figure~\ref{fig:subfig:a}, the schedulability performance of our mechanism clearly depends on the tolerance limit; a higher limit generally implies lower schedulability, because it uses additional resources to support $LC$ executions. For values of $TL_H$ up to $\lfloor 0.4|H|\rfloor$, our mechanism outperforms both service adaptation and ICG on an average. Similar trends can also be observed for our mechanism under a hierarchical framework, except that the schedulability drops more rapidly due to the overhead of hierarchical scheduling. The classical model is represented by the curve with $TL_H=0$ and it has the highest schedulability, but offers no support for $LC$ executions when $HC$ tasks switch to $HC$ mode. Thus we can conclude that as long as no more than $\lfloor 0.4|H|\rfloor$ of the $HC$ tasks execute in $HC$ mode at each time instant, our mechanism offers the best performance in terms of offline schedulability as well as online support for $LC$ executions.


\subsection{Online Support for $LC$ Executions}
\label{sec:online}
In this section, we compare the performance of our mechanism in terms of its ability to support $LC$ executions with the other mechanisms described above. We use the following quantitative parameter to measure this online $LC$ performance.
\begin{definition}[Percentage of Finished $LC$ Jobs (\textbf{$PFJ$})]
Let $MAX_t$ denote the maximum possible number of $LC$ jobs that a taskset $\mathcal{T}$ can generate in the time interval $[0, t)$. By definition, $MAX_t = \sum\limits_{L_i=LC} \lceil {t}/{T_i}\rceil$. Let $FIN_t$ denote the number of $LC$ jobs that successfully finish by their deadlines in the time interval $[0, t)$ using some mechanism. Then, $PFJ$ is equal to ${FIN_t}/{MAX_t}$.
\end{definition}
\begin{figure}[tbp]
\centering
\includegraphics[scale=0.335]{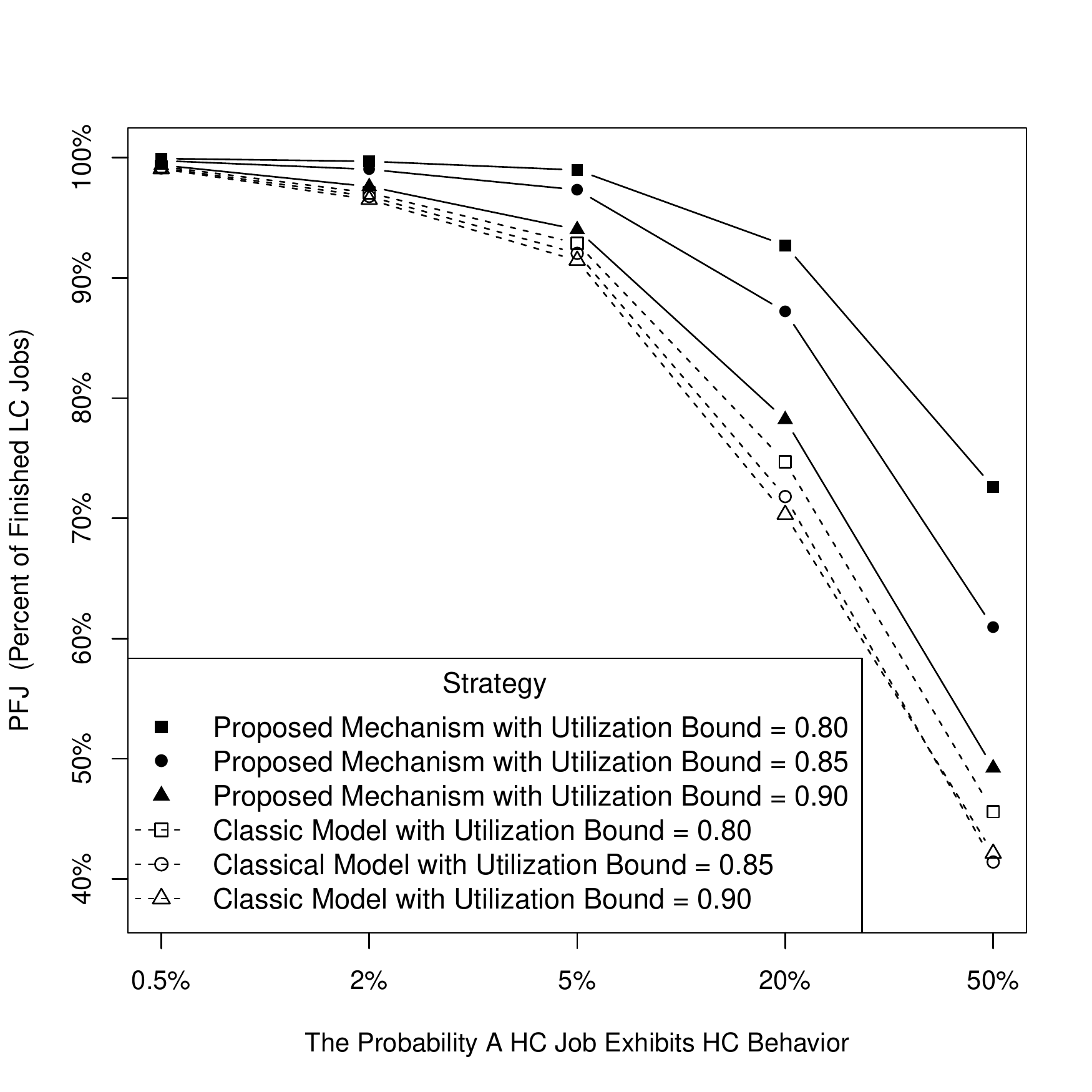}
\caption{$\max \{ U^L_L + U_H^L, U_H^H \} = 0.8,0.85 \mbox{ and }0.9$}
\label{fig:p4}
\end{figure}
Tasksets are generated using the procedure described earlier, and the various mechanisms are simulated to measure their online performance. The following additional settings and restrictions are used for this purpose. 
\begin{itemize}
\itemsep=-0.8pt
	\item $\max \{ U^L_L + U_H^L, U_H^H \} = 0.8,0.85 \mbox{ and }0.9$. 
 	\item Tolerance limit $TL_H$ is chosen to be the largest value that still guarantees schedulability of our mechanism under a flat scheduling framework.
	\item Tasksets are simulated for $t=10,000$ time units.
	\item Each $HC$ job independently switches to $HC$ mode, i.e., executes for more than $LC$ WCET, with a probability of $0.005, 0.02, 0.05, 0.2$ or $0.5$. 
	\item All the mechanisms will transition back to $LC$ mode of execution when there are no pending jobs.
\end{itemize}

We have chosen a relatively high value for $\max \{ U^L_L + U_H^L, U_H^H \}$, because at smaller values there is sufficient spare capacity so that all the mechanisms are easily able to support $LC$ executions. Simulation results are shown in Figures~\ref{fig:p1} and \ref{fig:p4}. The x-axis denotes the probability that a $HC$ job independently switches to $HC$ mode, and the y-axis denotes $PFJ$ for each mechanism. Each point in these figures is generated by taking an average value of $PFJ$ over $1000$ tasksets. In Figure~\ref{fig:p1}, we consider only those tasksets that are deemed to be offline schedulable by all the presented mechanisms. As shown in the figure, our mechanism consistently outperforms all the other mechanisms for different values of mode switch probability, and the performance gap improves with increasing probability values.
\ae{
One should note that the results in Figure~\ref{fig:p1} may not be truly representative of the performance of our mechanism in terms of its ability to support $LC$ jobs, and this can be explained as follows. To compare our mechanism's ability to support $LC$ executions with the other mechanisms, we have to simulate using tasksets that are schedulable by all these mechanisms.  In particular, it does not include many tasksets that are schedulable under our mechanism, but not under one of the other mechanisms.  From our observation, in the tasksets that are  schedulable by all these mechanisms, the average percentage of $HC$ tasks is much higher than that of $LC$ tasks. Hence to show the ability of our mechanism to support $LC$ executions in  a  more objective way,  we compare  the proposed mechanism alone with the classical model,  with utilization bound $\max \{ U^L_L + U_H^L, U_H^H \}=0.8,~0.85 \mbox{ and } 0.9$ as shown in Figure~\ref{fig:p4}. In this case, any taskset schedulable  by the classical model can be used in  the simulation. It can be seen that the performance of both our mechanism and the classical model drops when compared with the results in Figure~\ref{fig:p1}. However, it can also been seen that, our mechanism still dominates the classical model and the corresponding performance gap does not decrease compared with the gap in Figure~\ref{fig:p1}.
}

\section*{Acknowledgment}
\ae{
This work was supported in part by MoE Tier-2 grant (MOE2013-T2-2-029) and NTU start-up grant, Singapore. This work was also supported in part by MSIP/IITP (14-824-09-013) funded by the Korea Government.}

\section{Conclusions}
\label{sec:conclusions}
In this paper we proposed a novel mechanism to improve the service levels of low-criticality tasks by  allowing them to execute even when some high-criticality tasks have exceeded their estimated WCETs. We developed schedulability tests for our mechanism under the mixed-criticality EDF scheduling strategy, considering both a flat as well as an hierarchical scheduling framework. We also evaluated the performance of our mechanism in terms of offline schedulability and online support for low-criticality executions. Simulation results clearly show that the proposed mechanism outperforms all the existing approaches.

In the evaluation section we only consider the performance of our mechanism when all the high-criticality tasks are in one component and all the low-criticality tasks are in another component. In fact, its performance can be further improved if we also consider scenarios in which the low-criticality tasks are allocated to the same component as the high-criticality ones, especially in terms of offline schedulability. In our future work we will consider this problem of optimally allocating the low-criticality tasks so as to maximize offline schedulability as well as online performance.


\bibliographystyle{IEEEtran}
\bibliography{all}
\appendix
\section{Appendix}
\subsection{$\mbox{Dbf}$ Optimization}
\label{sec:opt}
When component $\mathbb{C}$  experiences EMS, i.e., the case when $t_E<t$, it is pessimistic to simply add up the demand of all the tasks. Here we introduce an optimization that can be applied in the schedulability test to reduce this pessimism.  We split $\mbox{dbf}(\tau_i,t,t_i)$ into two elements, $\mbox{DL}(\tau_i,t,t_i)$ denoting the demand for the interval $[0,t_E)$, and $\mbox{DH}(\tau_i,t,t_i)$ denoting the demand for the interval $[t_E,t)$. 
\begin{small}
\begin{equation}
dbf(\tau_i,t,t_i) = \mbox{DL}(\tau_i,t,t_i) +\mbox{DH}(\tau_i,t,t_i)
\end{equation}
\end{small}

Below we present a key observation that provides some insight into this split. Since the first deadline miss is assumed to happen at time instant $t$ in our schedulability test, the demand before $t_E|<t$ cannot exceed $t_E$. Otherwise, the first deadline miss would happen at or before $t_E$. Thus the total demand during $[0,t_E)$ can be bounded by $t_E$, and as a consequence $\mbox{dbf}(\mathbb{C},t,t_E,t_I)$ can be more tightly bounded as follows.
\begin{small}
\begin{equation}
\label{eqn:optidbf}
\begin{split}
&\mbox{dbf}(\mathbb{C},t,t_E,t_I)=\mbox{DL}+\mbox{DH}+\sum\limits_{\Delta_i\in \mathcal{G}} \Delta_i \mbox{, where}\\
&\mbox{DH}=\sum\limits_{L_i=LC} \mbox{DH}(\tau_i,t,t_I)+\sum\limits_{L_i=HC} \mbox{DH}(\tau_i,t,t_E), \mbox{ and} \\
&\mbox{DL}=\min \left \{t_E, \sum\limits_{L_i=LC} \mbox{DL}(\tau_i,t,t_I)+\sum\limits_{L_i=HC} \mbox{DL}(\tau_i,t,t_E) \right \} \\
\end{split}
\end{equation}
\end{small}

In Equation~\ref{eqn:optidbf}, we use $\mbox{DL}$ to bound the total demand  of $\mathbb{C}$ for the interval $[0,t_E)$, and $\mbox{DH}$ to bound the total demand for  the interval $[t_E, t)$. In order to maximize the total demand, we must then split the demand between $\mbox{DL}$ and $\mbox{DH}$ such that $\mbox{DH}$ is maximized (or equivalently $\mbox{DL}$ is minimized). This is because the total demand for the interval $[0, t_E)$ is bounded by $t_E$.

In Section~\ref{sec:task} we already present $\mbox{dbf}(\tau_i,t,t_i)$ when task $\tau_i$ satisfies condition $a$, $b$, $c$ or $d$. Here we present $\mbox{DL}(\tau_i,t,t_i)$ and  $\mbox{DH}(\tau_i,t,t_i)$ for these cases, such that $\mbox{DH}(\tau_i,t,t_i)$ is maximized. If $\tau_i$ is a $LC$ task, then it cannot execute after $t_E$ (dropped at $t_i=t_I\leq t_E$). Hence for \textbf{condition a},  \\
\begin{small}
\begin{equation}
\label{eqn:sl}
\begin{split}
&\mbox{DL}(\tau_i,t,t_I)_{[a]}=\mbox{dbf}(\tau_i,t,t_I)_{[a]}\\ 
&\mbox{DH}(\tau_i,t,t_I)_{[a]}=0
\end{split}
\end{equation}
\end{small}

Consider the case when $\tau_i$ satisfies \textbf{condition b}, i.e.,  $L_i=HC$ and $t-t_i< D_i-D_i^L$. Here as well $\tau_i$ cannot execute after $t_E$ as given in Lemma~\ref{lemma:HCcase1}. Hence,
\begin{small}
\begin{equation}
\label{eqn:sh1}
\begin{split}
&\mbox{DL}(\tau_i,t,t_E)_{[b]}=\mbox{dbf}(\tau_i,t,t_E)_{[b]}\\
&\mbox{DH}(\tau_i,t,t_E)_{[b]}=0
\end{split}
\end{equation}
\end{small}

\begin{figure}[htbp]
\centering
\includegraphics[scale=0.7]{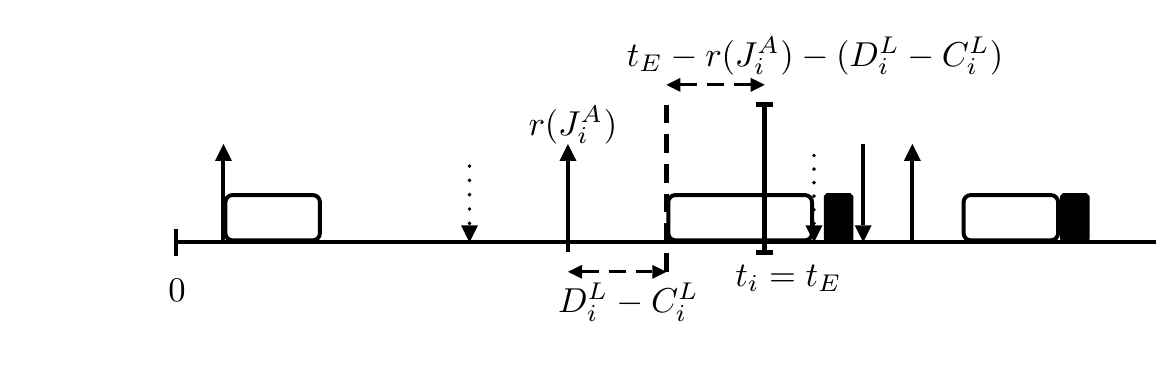} 
\caption{$\mbox{DL}(\tau_i,t,t_E)_{[c]}$ and $\mbox{DH}(\tau_i,t,t_E)_{[c]}$}
  \label{fig:HC1}
\end{figure}

Consider the case when $\tau_i$ satisfies \textbf{condition c}, i.e., $L_i=HC$ and $t-t_i \geq D_i$. In this case $t_i(=t_E)$ occurs after the release of special job $J_i^A$ and this scenario is shown in Figure~\ref{fig:HC1}.  To minimize the demand of $J_i^A$ before $t_E$, we assume that it executes as late as possible. Thus, $J_i^A$'s demand before $t_E$ can be bounded by $t_E-r(J_i^A)-(D_i^L-C_i^L)$, and we have,  \\
\begin{small}
\begin{equation}
\label{eqn:sh2}
\begin{split}
&\mbox{DL}(\tau_i,t,t_E)_{[c]}\!=\!\min\left\{\left[t_E-r(J_i^A)-(D_i^L-C_i^L)\right]_0,C_i^L \right\}\\&~~~~~~~~~~~~~~~~~~~~~~+b_i\times C_i^L, \\
&\mbox{DH}(\tau_i,t,t_E)_{[c]}\!=\!-\!\min\left\{\left[t_E-\!r(J_i^A)-\!(D_i^L-C_i^L)\right]_0,C_i^L  \right\}\\&~~~~~~~~~~~~~~~~~~~~~~+\mbox{dbf}(J_i^A,t,t_E)+a_i\times C_i^H, \mbox{ where} \\ 
&b_i=\left \lfloor \left(t_E-\!(t-D_i-\!\lfloor {(t-D_i)}/{T_i} \rfloor \times T_i)\right)/T_i \right\rfloor,\\
&a_i = \left\lfloor \left(t-D_i\right)/T_i\right \rfloor-b_i,\mbox{~and}\\
&r(J_i^A)=t-D_i-\lfloor {(t-D_i)}/{T_i} \rfloor \times T_i+b_i\times T_i.
\end{split}
\end{equation}
\end{small}

Finally, consider the case when $\tau_i$ satisfies \textbf{condition d}, i.e., $L_i=HC$ and $D_i-D_i^L \leq t-t_i < D_i$. In this case as well $\mbox{DH}(\tau_i,t,t_i=t_E)_{[d]}$ is maximized if the first job is released at $t-D_i-\lfloor {(t-D_i)}/{T_i} \rfloor \times T_i$ (pattern of condition c), and therefore we have, \\
\begin{small}
\begin{equation}
\begin{split}
&\mbox{DH}(\tau_i,t,t_E)_{[d]}=\mbox{DH}(\tau_i,t,t_E)_{[c]}\\
&\mbox{DL}(\tau_i,t,t_E)_{[d]}=\mbox{dbf}(\tau_i,t,t_E)_{[d]}-\mbox{DH}(\tau_i,t,t_E)_{[d]}
\end{split}
\end{equation}
\end{small}

\subsection{Upper bound for $t_{MAX}$}
\label{sec:tmax}
Consider a mixed-criticality system with $p$ $HC$ components $\mathbb{C}_{1},\mathbb{C}_{2},\ldots,\mathbb{C}_{p}$ and $q$ $LC$ components $\mathbb{C}_{p+1},\mathbb{C}_{p+2},\ldots,\mathbb{C}_{p+q}$. Let $U_L^L(j)=\sum\limits_{Li=LC}^{\tau_i\in \mathbb{C}_j}C_i^L/T_i$, $U_H^L(j)=\sum\limits_{Li=HC}^{\tau_i\in \mathbb{C}_j}C_i^L/T_i$ and $U_H^H(j)=\sum\limits_{Li=HC}^{\tau_i\in \mathbb{C}_j}C_i^H/T_i$.

\textbf{Case 1}: If  component $\mathbb{C}_j$ experience IMS at  $t_{Ij}$, then the demand of a $LC$ task $\tau_i$  in the time interval $[0,t)$ is  upper bounded by $(t_{Ij}/T_i+1)\times C_i^L$, because $\tau_i$ will be dropped after $t_{Ij}$.  

A $HC$ task  $\tau_i$ in  $\mathbb{C}_j$ switches to $HC$ mode at some time instant $t_i\in [t_{Ij},t_E]$. The demand of $\tau_i$ before job $J_i^A$ is bounded by ${t_i}/{T_i}\times C_i^L$, the demand of job $J_i^A$ is bounded by $C_i^H$, and the demand after  $t_i$ is bounded by $(t-t_i-D_i+T_i)/{T_i}\times C_i^H$. Thus the total demand of $\tau_i$ in the time interval $[0,t)$ is bounded by 
\begin{eqnarray}
\label{eqn:h1case1}
\frac{t_i}{T_i}\times C_i^L+C_i^H+\frac{t-t_i-D_i+T_i}{T_i}\times C_i^H 
\end{eqnarray}
Since $C_i^H>C_i^L$ and $t_i\in[t_{Ij},t_E]$,  the value of Expression (\ref{eqn:h1case1}) is maximized when $t_i=t_{Ij}$. Therefore  the  total demand  of $\mathbb{C}_j$  is bounded by 
\begin{small}
\begin{align*}
&\sum\limits_{Li=HC}^{\tau_i\in \mathbb{C}_j}\left(t_{Ij}\times C_i^L+C_i^H(t-t_{Ij}-D_i+2T_i)\right)/T_i\\
&+\sum\limits_{L_i=LC}^{\tau_i\in \mathbb{C}_j}(t_{Ij}/T_i+1)\times C_i^L  \\
&\leq U_H^H(j)\times t+\max\limits_{\tau_i\in\mathbb{C}_j }\{2T_i-D_i\}\times U_H^H(j)+\sum\limits_{L_i=LC}^{\tau_i\in \mathbb{C}_j} C_i^L\\&+(U_L^L(j)+U_H^L(j)-U_H^H(j))\times t_{Ij}
\end{align*}
\end{small}
\textbf{Case 2}: Suppose  component $\mathbb{C}_j$ does not experience IMS, i.e., all the $LC$ tasks within $\mathbb{C}_j$ are dropped after $t_E$, and all the $HC$ tasks  switch to $HC$ mode at $t_E$. In this case,  the demand of a $LC$ task $\tau_i$  in the time interval $[0,t)$ is  upper bounded by $(t_E/T_i+1)\times C_i^L$, and the demand of a $HC$ task $\tau_i$  in the time interval $[0,t)$ is  upper bounded by $\frac{t_E}{T_i}\times C_i^L+C_i^H+\frac{t-t_E-D_i+T_i}{T_i}\times C_i^H$. Therefore  the  total demand of $\mathbb{C}_j$  is bounded by 
\begin{small}
\begin{align*}
&\sum\limits_{Li=HC}^{\tau_i\in \mathbb{C}_j}\left(t_E\times C_i^L+C_i^H\times(t-t_E-D_i+2T_i)\right)/T_i \\&+\sum\limits_{L_i=LC}^{\tau_i\in \mathbb{C}_j}(t_E/T_i+1)\times C_i^L\\ & \leq U_H^H(j) \times t+\max\limits_{\tau_i\in \mathbb{C}_j}\{2T_i\!-D_i\}\times U_H^H(j)+\sum\limits_{L_i=LC} ^{\tau_i\in \mathbb{C}_j}C_i^L\\&~+(U_L^L(j)+U_H^L(j)-U_H^H(j))\times t_E
\end{align*}
\end{small}
Let A denote the set of components $\mathbb{C}_j$ with $U_L^L(j)+U_H^L(j)-U_H^H(j)<0$, and B denote the remaining set of components. Then if $\mathbb{C}_j\in A$, its demand bound given above is maximized when $t_{Ij}=0$ or $t_E=0$. On the other hand, if $\mathbb{C}_j\in B$, its demand bound is maximized when $t_{Ij}=t$ or $t_E=t$. Thus, an upper bound on the total demand of $\mathbb{C}_j$ is equal to
\begin{equation}
\label{eqn:carryover}
\begin{split}
\begin{cases}
&U_H^H(j) \times t+\max\limits_{\tau_i\in \mathbb{C}_j}\{2T_i-D_i\}\times U_H^H(j)\\&+\sum\limits_{L_i=LC} ^{\tau_i\in \mathbb{C}_j}C_i^L~~~~~~~~~~~~~~~~~~~~\mbox{if }\mathbb{C}_j\in A \\
&\max\limits_{\tau_i\in \mathbb{C}_j}\{2T_i-D_i\}\times U_H^H(j)+\sum\limits_{L_i=LC} ^{\tau_i\in \mathbb{C}_j}C_i^L\\&+(U_L^L(j)+U_H^L(j))\times t~~~~~~~\mbox{if }\mathbb{C}_j\in B
\end{cases}
\end{split}
\end{equation}
Suppose $\sum\limits_{j=1}^{j\leq p+q} \mbox{dbf}(\mathbb{C}_{j},t,t_E,t_{Ij})> t
$ for some $t$. Then it must be the case that 
\begin{small}
\begin{align*}
&\sum\limits_{j=1}^{j\leq p+q} \left(\max\limits_{\tau_i\in \mathbb{C}_j}\{2T_i-D_i\}\times U_H^H(j)+\sum\limits_{L_i=LC} ^{\tau_i\in \mathbb{C}_j}C_i^L\right)\\
&>t\left(1-\sum\limits_{\mathbb C_j\in A}U_H^H(j)-\sum\limits_{\mathbb C_j\in B}(U_L^L(j)+U_H^L(j))\right)\\
& \Rightarrow t<\frac{\sum\limits_{j=1}^{j\leq p+q} \left(\max\limits_{\tau_i\in \mathbb{C}_j}\{2T_i-D_i\}\times U_H^H(j)+\sum\limits_{L_i=LC} ^{\tau_i\in \mathbb{C}_j}C_i^L\right)}{1-\sum\limits_{\mathbb C_j\in A}U_H^H(j)-\sum\limits_{\mathbb C_j\in B}(U_L^L(j)+U_H^L(j))}
\end{align*}
\end{small}
Thus we can conclude that the upper bound of $t$, i.e., $t_{MAX}$, is given as  
\begin{small}
\begin{align*}
\frac{\sum\limits_{j=1}^{j\leq p+q} \left(\max\limits_{\tau_i\in \mathbb{C}_j}\{2T_i-D_i\}\times U_H^H(j)+\sum\limits_{L_i=LC} ^{\tau_i\in \mathbb{C}_j}C_i^L\right)}{1-\sum\limits_{\mathbb C_j\in A}U_H^H(j)-\sum\limits_{\mathbb C_j\in B}(U_L^L(j)+U_H^L(j))}
\end{align*}
\end{small}

\end{document}